\newcolumntype{C}[1]{>{\centering\arraybackslash}m{#1}}
\newtheorem{theorem}{\textbf{Theorem}}
\newtheorem{lemma}{\textbf{Lemma}}
\theoremstyle{definition}
\newtheorem{problem}{\textbf{Problem}}
\newtheorem*{problem*}{\textbf{Problem}}
\newtheorem{remark}{\textbf{Remark}}
\newtheorem{definition}{\textbf{Definition}}
\DeclareMathOperator*{\argmax}{arg\,max}
\DeclareMathOperator{\nc}{\overline{C}}
\DeclareMathOperator{\opt}{opt}
\DeclareMathOperator{\dis}{dis}
\DeclareMathOperator{\define}{\coloneqq}
\let\G\relax
\let\P\relax
\DeclareMathOperator{\E}{\mathbb{E}}
\DeclareMathOperator{\G}{\mathcal{G}}
\DeclareMathOperator{\R}{\mathcal{R}}
\DeclareMathOperator{\P}{\mathcal{P}}
\def\BibTeX{{\rm B\kern-.05em{\sc i\kern-.025em b}\kern-.08em
    T\kern-.1667em\lower.7ex\hbox{E}\kern-.125emX}}
\begin{document}

\title{Beyond Uniform Reverse Sampling: A Hybrid Sampling Technique for Misinformation Prevention \thanks{This is the online version of \cite{tong2019beyond} to fix a mistake in Sec. \ref{subsec: lower}. In particular, a new analysis of estimating the lower bound is proposed.}}

\author{\IEEEauthorblockN{
Guangmo Tong\IEEEauthorrefmark{1}
and
Ding-Zhu Du\IEEEauthorrefmark{2}}
\IEEEauthorblockA{\IEEEauthorrefmark{1}Department of Computer and Information Sciences, University of Delaware, USA\\}
\IEEEauthorblockA{\IEEEauthorrefmark{2}Department of Computer Science, University of Texas at Dallas, USA} 
\IEEEauthorblockA{Email: amotong@udel.edu, dzdu@utdallas.edu}
}

\maketitle

\begin{abstract}
Online misinformation has been considered as one of the top global risks as it may cause serious consequences such as economic damages and public panic. The misinformation prevention problem aims at generating a positive cascade with appropriate seed nodes in order to compete against the misinformation. In this paper, we study the misinformation prevention problem under the prominent independent cascade model. Due to the \#P-hardness in computing influence, the core problem is to design effective sampling methods to estimate the function value. The main contribution of this paper is a novel sampling method. Different from the classic reverse sampling technique which treats all nodes equally and samples the node uniformly, the proposed method proceeds with a hybrid sampling process which is able to attach high weights to the users who are prone to be affected by the misinformation. Consequently, the new sampling method is more powerful in generating effective samples used for computing seed nodes for the positive cascade. Based on the new hybrid sample technique, we design an algorithm offering a $(1-1/e-\epsilon)$-approximation. We experimentally evaluate the proposed method on extensive datasets and show that it outperforms the state-of-the-art solutions.
\end{abstract}

\begin{IEEEkeywords}
misinformation prevention, sampling, social network
\end{IEEEkeywords}

\section{Introduction}
According to the World Economic Forum\footnote{http://reports.weforum.org/}, misinformation has been one of the top global risks due to its potential for spreading fake news and malicious information. Once misinformation is detected, one feasible method is to introduce a positive cascade which is expected to reach the users before they are affected by the misinformation. The misinformation prevention (MP) problem aims at selecting effective seed nodes for the positive cascade such that the spread of misinformation can be maximally limited. The MP problem is naturally formulated as a combinatorial optimization problem and it has drawn great attention \cite{budak2011limiting,fan2013least,tong2017efficient,li2013rumor,he2012influence,ping2014sybil,tong2018misinformation}. Following this branch, we in this paper study the MP problem and our goal is to design solutions which are theoretically supported and highly effective in practice.

\textbf{Influence Maximization.} In the seminal work \cite{kempe2003maximizing} of Kempe, Kleinberg and Tardos, the well-known influence maximization (IM) problem was studied under two prominent operational models, independent cascade (IC) model and linear threshold (LT) model. The goal of the IM problem is to select seed nodes which can maximize the influence resulted by a single information cascade. There are two core techniques used for solving the IM problem. One is the study on its combinatorial properties and the other one is sampling technique utilized to speed up the algorithms. From the former, it has been shown that the IM problem is monotone and submodular, and therefore the simple greedy algorithm provides a $(1-1/e)$-approximation. Unfortunately, computing the influence of the cascade is a \#P-hard problem \cite{chen2010scalable}. As a result, the objective function cannot be efficiently computed, and we wish to estimate the function value by sampling techniques. As the most successful framework, C. Borg \textit{et al.} \cite{borgs2014maximizing} proposed the uniform reserve sampling technique which finds an unbiased estimator of the objective function via two steps: (1) uniformly select a node $v$ from the ground set; (2) simulate the diffusion process from $v$ in a reverse direction until no node can be further reached, and collect the set $S_v$ of the traversed nodes. After each sampling, we obtain a sample $S_v$ which is a subset of the users. The idea is that those in $S_v$ are the nodes that can influence $v$. Therefore, if $S_v \cap S \neq \emptyset$, it implies that $v$ can be influenced by $S$. Thus, given a collection of the samples, intuitively, the seed set $S$ which covers\footnote{We say $S$ covers a sample $S_v$ if $S \cap S_v \neq \emptyset$} the maximum number of samples should be able to maximize the objective function \cite{borgs2014maximizing}. Such a framework was later improved by \cite{tang2014influence,tang2015influence,nguyen2016stop} and it has been widely applied to other problems regarding influence diffusion in social networks.

\textbf{Misinformation Prevention.} Following the Borg's work \cite{borgs2014maximizing}, authors in \cite{tong2017efficient} designed a reverse sampling method for the MP problem. Given the seed set of the misinformation, the method proposed in \cite{tong2017efficient} proceeds with two analogous steps: (1) uniformly select a node $v$ from the ground set; (2) simulate the diffusion process from $v$ in a reverse direction in the manner of BFS until a seed of misinformation is reached, and collect the set $S_v$ of the traversed nodes. It is shown in \cite{tong2017efficient} that those in $S_v$ are in fact the nodes that can protect $v$ from being affected by the misinformation. Therefore, the nodes that can maximally cover the samples are able to best limit the spread of misinformation. 

\begin{figure}[t]
\begin{center}
\includegraphics[width=3.2in]{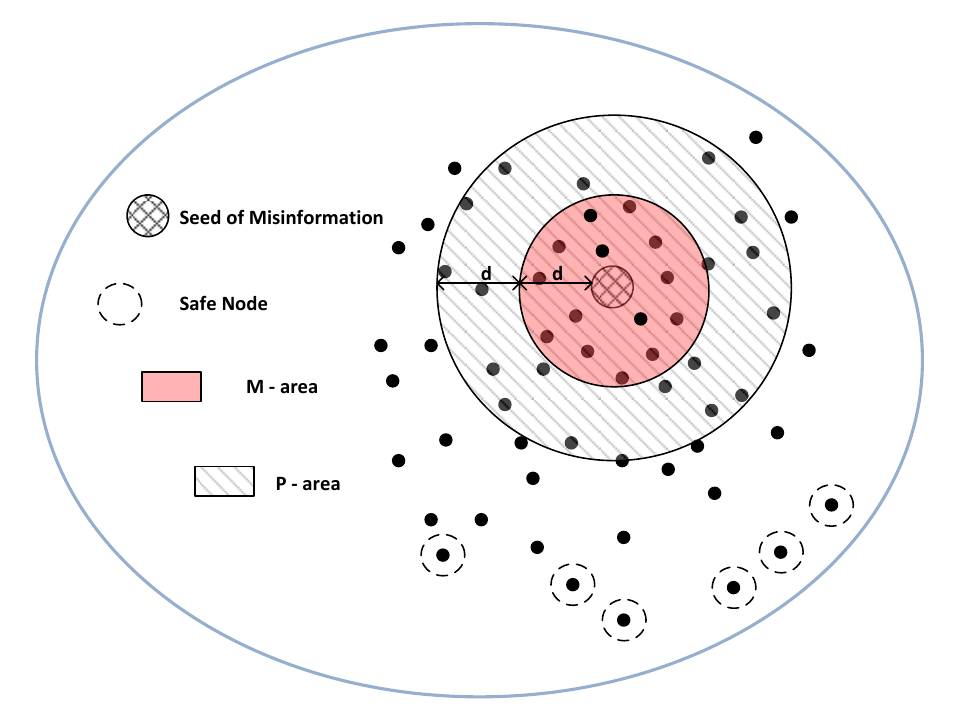} 
\end{center} 
\caption{An illustrative example}
\label{fig: example}
\vspace{-5mm}
\end{figure}

\textbf{Contribution.} Our work is driven by considering the difference between the IM problem and the MP problem. In the IM problem, all the nodes are treated equally and we cannot do better, without any prior information, than uniformly selecting a node in the first step of the sampling process. However, in the MP problem, the ultimate goal is to protect the nodes from being influenced by the misinformation, so intuitively it would be better if we could pay more attention to the nodes that are more likely to be misinformation-influenced. Consider an illustrative example shown in Fig. \ref{fig: example} where there is one seed node of the misinformation. If we uniformly select a node and do the reverse sampling, we may select the nodes that are far distant from the misinformation and they actually do not require much effort to be protected (e.g., the safe nodes in the graph). Instead, because the nodes around the seed of misinformation are the most likely to be influenced by the misinformation, we should focus more on the nodes in the M-area shown in the graph. Furthermore, because a node can be protected only if the positive cascade can reach that node before the arrival of misinformation, to protect the nodes in the M-area, the most effective positive seed nodes should lie in the P-area in the graph. Motivated by this idea, we present a hybrid sampling method consisting of two high-level steps: 
\begin{enumerate}
\item \textbf{Forward Sampling}: simulate the diffusion process of the misinformation and collect the nodes that are influenced by the misinformation.
\item \textbf{Reverse Sampling}: for each node $v$ collected in the first step, do the reverse sampling process to obtain the nodes that can protect that $v$.
\end{enumerate} 
In this sampling method, the frequency that a node can be collected in the first step is proportional to the probability that it will be affected by the misinformation. Thus, the samples produced by the second step are more likely to be the protectors of the nodes which are prone to be misinformation-influenced. Note that the pattern of the sample areas is determined by the nodes collected in the first step. As shown in Fig. \ref{fig: patterns}, under the uniform reverse sampling the samples are uniformly distributed to the whole graph, while under the hybrid sampling the samples tend to be centered around the seed node of the misinformation.  As shown later, the sample obtained by our hybrid sampling method can be used to directly estimate the prevention effect of the positive cascade. Based on the hybrid sampling method, we propose a new randomized approximation algorithm for the MP problem. In order to evaluate the proposed algorithm, we design experiments which compare the algorithms by evaluating their performance under the same time constraint. The effectiveness of our solution is supported by encouraging experimental results.

\begin{figure}[t]
\captionsetup{justification=centering}
\subfloat[Sample pattern under the uniform reverse sampling]{\label{fig: pattern_a}\includegraphics[width=0.23\textwidth]{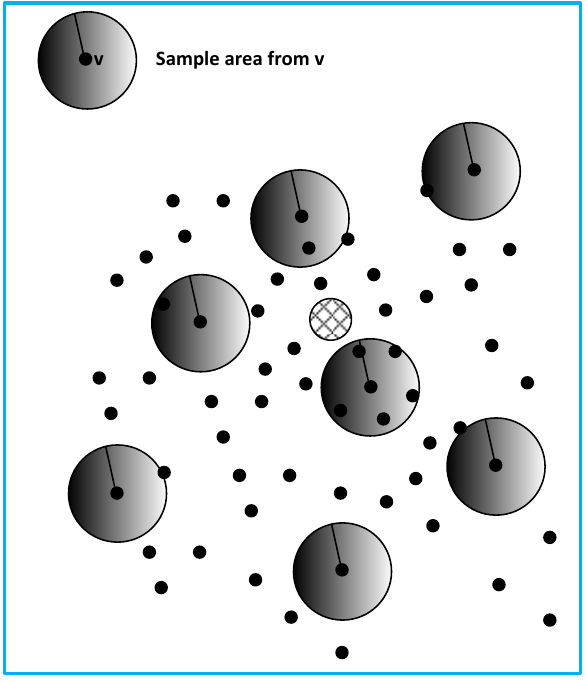}} \hspace{3mm}
\subfloat[Sample pattern under the hybrid reverse sampling]{\label{fig: pattern_b}\includegraphics[width=0.23\textwidth]{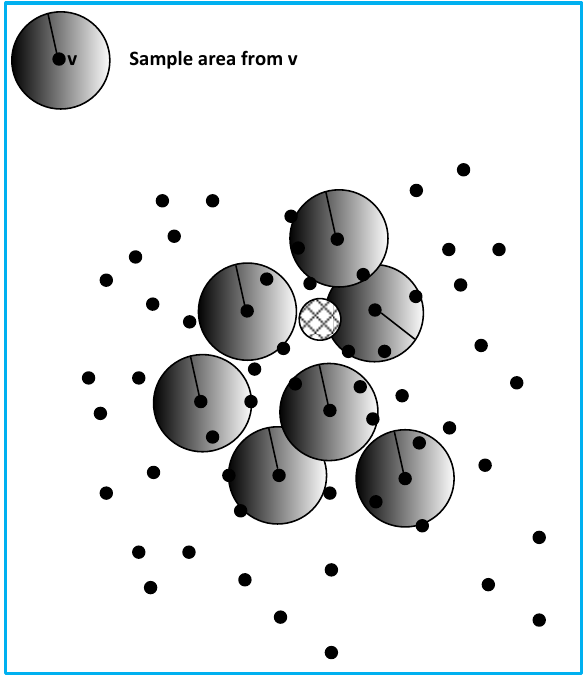}} \hspace{0mm}
\caption{Sample patterns} 
\label{fig: patterns}
\vspace{-3mm} \normalsize
\end{figure}

\textbf{Organization.} In Sec. \ref{sec: model}, we introduce the considered diffusion model and formally formulate the MP problem. We present the new hybrid sampling technique in Sec. \ref{sec: sampling}. Our algorithm is later given in Sec. \ref{sec: algortihm} in which we also provide theoretical analysis. In Sec. \ref{sec: exp}, we experimentally evaluate the proposed solution. Sec. \ref{sec: con} concludes this paper and proposes some future problems. Due to space limitation, the related works will be introduced throughout this paper and we do not spare a separate section for them.


\section{System Model}
\label{sec: model}
We consider the classic independent cascade (IC) model for influence diffusion. A social network is given by a directed graph $G=(V, E)$ where $V$ is the set of the user and $E$ denotes the relationship between users. Let $n$ and $m$ be the number of nodes and edges, respectively. Associated with each edge $(u, v) \in E$, there is a real number $\Pr[(u, v)] \in (0,1]$ denoting the propagation probability from $u$ to $v$. An information cascade $C$ starts to spread from its seed users $S \subseteq V$. We say a user is $C$-active if they are influenced by cascade $C$. All users are initialized as $\emptyset$-active, and the seed nodes of cascade $C$ are the first that are $C$-active. When a user $v$ becomes $C$-active for some cascade $C$, they have one chance to attempt to activate their $\emptyset$-active neighbor $v$ where the success probability is $\Pr[(u, v)]$. If the process succeeds,  $v$ becomes $C$-active as well. We consider the scenario where there are two cascades, the misinformation $C_r$ and the positive cascade $C_p$. We assume that the misinformation has a higher priority. That is, when a user $v$ is (a) selected as a seed node by both cascades or (b) simultaneously activated by two neighbors which are respectively $C_r$-active and $C_p$-active, $v$ will be $C_r$-active.\footnote{We note that this is not a critical setting and the analysis of this paper can be applied to the case when the positive cascade has a higher priority.} 

We assume that the seed set of the misinformation, denoted as $S_r$, is known to us, and we aim at introducing the positive cascade to spread competing against the misinformation cascade. In particular, our goal is to select an appropriate seed set for the positive cascade such that the misinformation can be prevented maximally. We say a user is $\nc_r$-active if they are not $C_r$-active. We use $f(S)$ to denote the expected number of the $\nc_r$-active nodes under a positive seed set $S \subseteq V$. The following two problems have been studied in the existing work. 

\begin{problem}[\textbf{Maximum Non-misinformation (MN) Problem}] 
\label{problem: max_non}
Given an instance of the IC model, the seed set $S_r \subseteq V$ of the misinformation, and a budget $k \in \mathbb{Z}^+$, find a subset $S \subseteq V$ with $|S| \leq k$ such that $f(S)$ is maximized.
\end{problem}

\begin{problem}[\textbf{Maximum Prevention (MP) Problem}] 
\label{problem: max_prevetion}
Given an instance of the IC model, the seed set $S_r \subseteq V$ of the misinformation, and a budget $k \in \mathbb{Z}^+$, find a subset $S \subseteq V$ with $|S| \leq k$ such that $f^*(S) \coloneqq f(S)-f(\emptyset)$ is maximized.
\end{problem}

The MN problem maximizes the expected number of the nodes which are not activated by the misinformation, while the MP problem maximizes $f(S)-f(\emptyset)$ where $f(\emptyset)$ is expected number of the $\nc_r$-active nodes when there is no positive cascade. Because $f(S)-f(\emptyset)$ denotes the exact effect of the positive cascade in preventing the misinformation, we call the second problem as the Maximum Prevention problem. These two problems are equivalent to each other with respect to the optimal solution, as their objective functions are differed by a constant $f(\emptyset)$. However, with respect to optimization, the MP problem is more challenging especially when $f(\emptyset)$ is relatively large. One can see that, for each $\alpha \in (0,1)$, each $\alpha$-approximation to the MP problem is also an $\alpha$-approximation to the MN problem. In this paper, we study the MP problem and use $S_{\opt}$ to denote its optimal solution.

\subsection{Derandomization of the diffusion process}
We say an edge $(u, v)$ is \textit{live} if $u$ can successfully activate $v$. According to the IC model, the randomness of the diffusion process comes from the states of the edges, where each edge can be live or not live with probability $\Pr[e]$ and $1-\Pr[e]$, respectively. The diffusion process is equivalent to that first sample the states of the edges and then run the diffusion process with the sampled states \cite{kempe2003maximizing}. The concept of realization is used to describe the basic event space.
\begin{definition}[\textbf{Realization}]
\label{def: realization}
Given an instance of IC model, a \textit{realization} $g$ is a two-tuple $g=(g(E_t), g(E_f))$ where $g(E_t), g(E_f) \subseteq E$ and $g(E_t) \cap g(E_f) =\emptyset$. $g(E_t)$ (resp., $g(E_f)$) denotes the set of the edges that are live (resp., not live). In addition, we say $g$ is a \textit{full-realization} if $g(E_t) \cup g(E_f) =E$.  Otherwise, it is a \textit{partial-realization}. For each realization $g$, either full or partial, we use $\Pr[g]$ to denote the probability that $g$ happens. As the states of edges are independent to each to other, we can easily check that \[\Pr[g]=\prod_{e \in g(E_t)}\Pr[e]\cdot \prod_{e \in g(E_f)}(1-\Pr[e]).\] We use $\G$ to denote the set of the full-realizations. The full-realizations provide a way to enumerate all the possible outcomes, and the partial-realizations are useful to describe an intermediate state during the diffusion process.  
\end{definition}
It is useful to interpret a full-realization $g$ as a subgraph of $G$ with the edge set $g(E_t)$. For each $g \in \G$, the diffusion process in $g$ is well-defined because the state of each edge is known to us, and furthermore, the process is deterministic. Due to the linearity of the expectation, we have $f(S)=\sum_{v \in V}f_v(S)$ where $f_v(S)$ is the probability that node $v$ is $\nc_r$-active under $S$. $f_v(S)$ can be further expressed as $\sum_{g \in \G}\Pr[g] \cdot f_v(S,g)$, where 
\[f_v(S,g) \define
  \begin{cases}
  1 &  \hspace{0mm} \hspace{-0.5mm} \text{$v$ is $\nc_r$-active in $g$ under $S$} \\
  0 & \hspace{0mm} \hspace{-0.5mm} \text{otherwise } 
  \end{cases},\]
and finally, we have 
\[f(S)=\sum_{v \in V}\sum_{g \in \G}\Pr[g] \cdot f_v(S,g).\] 
A natural problem is that under which condition that $v$ can be $\nc_r$-active in $g$. For each $V_1 \subseteq V$, $v \in V$ and $g \in \G$, we define $\dis_g(V_1, v) \define \min_{u \in V_1} \dis_g(u, v)$ where  $\dis_g(u, v)$ is the distance\footnote{\label{footnote: distance}The distant from $u$ to $v$ is measured in terms of the length of the shortest path from $u$ to $v$. We denote it as $\dis_g(u,v)=+\infty$ when there is no path from $u$ to $v$.} from $u$ to $v$ in $g$.  According to \cite{budak2011limiting}, \cite{tong2017efficient}, \cite{simpson2018scalable} and \cite{tong2018distributed}, the following lemma gives the necessary and sufficient condition for a node to be $\nc_r$-active in a full-realization.

\begin{lemma}
\label{lemma: condition_g}
Suppose the seed set of the positive cascade is $S \subseteq V$. For each $v \in V$ and $g \in \G$, $v$ is $\nc_r$-active in $g$ if and only if $\dis_g(S, v) < \dis_g(S_r,v)$ or $\dis_g(S_r,v)=+\infty$.
\end{lemma}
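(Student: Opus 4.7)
The plan is to exploit the fact that once $g \in \G$ is fixed the diffusion is deterministic and BFS-like: every active node attempts activation of each live-edge out-neighbor at the next time step, and in isolation a cascade seeded at $A$ would reach $v$ at time $\dis_g(A,v)$. The workhorse is a triangle inequality along shortest live-edge paths. Let $u_0,\dots,u_k$ be a shortest live-edge path from $S$ to $v$ with $k=\dis_g(S,v)$. Under the assumption $\dis_g(S,v)<\dis_g(S_r,v)$, I would first show $\dis_g(S_r,u_i)\geq i+1$ for every $i$: otherwise, splicing a shortest live-edge path from $S_r$ to $u_i$ onto the tail $u_i,\dots,u_k$ yields a live-edge path from $S_r$ to $v$ of length at most $k$, contradicting the hypothesis. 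The analogous inequality $\dis_g(S,u_i)\geq i$ along a shortest $S_r$-to-$v$ path holds under the weaker assumption $\dis_g(S_r,v)\leq \dis_g(S,v)$ by the same argument.

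With this in hand the ``if'' direction is direct. The case $\dis_g(S_r,v)=+\infty$ is immediate from an easy induction on time: $C_r$ propagates only along live edges from $S_r$, so $v$ is never $C_r$-active. In the remaining case, I would induct along $u_0,\dots,u_k$. At the base, $u_0\in S$ and $\dis_g(S_r,u_0)\geq 1$ force $u_0\notin S_r$, so the tie-breaking rule makes $u_0$ a $C_p$-seed, hence $C_p$-active at time $0$. In the step, $u_{i-1}$ is $C_p$-active at time $i-1$ and the live edge $(u_{i-1},u_i)$ lets $C_p$ attempt to claim $u_i$ at time $i$; the bound $\dis_g(S_r,u_i)\geq i+1$ forbids any earlier $C_r$-activation of $u_i$, so $u_i$ becomes $C_p$-active. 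At $i=k$ this gives $v$ as $C_p$-active, hence $\nc_r$-active.

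The ``only if'' direction I would handle contrapositively: assume $\dis_g(S_r,v)\leq \dis_g(S,v)$ and $\dis_g(S_r,v)<+\infty$, and walk along a shortest $S_r$-to-$v$ path using $\dis_g(S,u_i)\geq i$. The inductive step now has to handle the tie case $\dis_g(S,u_i)=i=\dis_g(S_r,u_i)$, in which both cascades reach $u_i$ simultaneously at time $i$; the priority rule breaks the tie in favor of $C_r$, so $u_i$ still becomes $C_r$-active, and at $i=k$ we conclude $v$ is $C_r$-active, not $\nc_r$-active. The main obstacle throughout is exactly this ``blocking'' worry in the induction: one must rule out the rival cascade intercepting the intended one at some intermediate $u_i$. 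The triangle inequality between the two distance functions supplies the needed bound cleanly, and the asymmetry between strict and non-strict inequality in the lemma's hypothesis is precisely what the priority-based tie-breaking rule demands.
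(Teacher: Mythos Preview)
The paper does not actually supply a proof of this lemma; it simply attributes the result to \cite{budak2011limiting,tong2017efficient,simpson2018scalable,tong2018distributed} and moves on. So there is nothing to compare your argument against within the paper itself.

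That said, your proposal is substantively correct and is the standard way this characterization is established. Two small points are worth tightening. First, your induction along the shortest $S$-to-$v$ path implicitly uses the auxiliary fact that any node which is $C_r$-active (resp.\ $C_p$-active) at time $t$ must satisfy $\dis_g(S_r,\cdot)\le t$ (resp.\ $\dis_g(S,\cdot)\le t$); this is immediate by tracing the activation chain back to a seed, but it is the hinge on which ``$\dis_g(S_r,u_i)\ge i+1$ forbids earlier $C_r$-activation'' rests, so it deserves a one-line justification. Second, the phrasing ``$u_{i-1}$ is $C_p$-active at time $i-1$'' should really be ``becomes $C_p$-active at time exactly $i-1$,'' since the activation attempt on $u_i$ happens one step after $u_{i-1}$ is newly activated. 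This stronger statement does hold: because the $u_j$ lie on a \emph{shortest} live-edge path from $S$, one has $\dis_g(S,u_j)=j$ exactly, and combined with the auxiliary fact above this rules out any earlier activation of $u_j$ by either cascade. With those two clarifications the induction is airtight in both directions, and your handling of the tie case in the contrapositive via the priority rule is exactly right.
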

For each $v \in V$ and $S \subseteq V$, we use \[\G(v, S) \define \{g \in \G| f_v(S,v)=1\}\] to denote the set of the full-realization(s) $g$ where $v$ is $\nc_r$-active under $S$ in $g$.  Therefore,
\begin{equation*}
f(S)=\sum_{v \in V}\sum_{g \in \G(v, S)}\Pr[g]
\end{equation*}
and 
\begin{equation*}
f^*(S)=\sum_{v \in V}\sum_{g \in \G(v, S)}\Pr[g]-\sum_{v \in V}\sum_{g \in \G(v, \emptyset)}\Pr[g].
\end{equation*}
According to Lemma \ref{lemma: condition_g}, it is clear that $\G(v, \emptyset) \subseteq \G(v, S)$ and $\G(v, \emptyset)$ consists of the realizations where there is no path from $S_r$ to $v$. Consequently, the difference between $\G(v, \emptyset)$ and $\G(v, S)$ is the set of the realizations $g$ where $\dis_g(S, v) < \dis_g(S_r,v)$ and meanwhile there is a path from $S_r$ to $v$. We use $\G^*(v, S)$ to denote the set of such realizations, i.e., $\G^*(v, S) \define$ \[\{g\in \G| \dis_g(S, v) < \dis_g(S_r,v),\dis_g(S_r,v)\neq +\infty\} .\]
Now our objective function $f^*(S)$ can be expressed as 
\begin{equation}
\label{eq: expand_f_star}
f^*(S)=\sum_{v \in V}\sum_{g \in \G^*(v, S)}\Pr[g],
\end{equation}
which is a useful expression for the analysis later.

\section{Hybrid Sampling Technique}
\label{sec: sampling}
It has been shown in \cite{budak2011limiting} that the MP problem is monotone nondecreasing and submodular, and therefore the greedy algorithm provides a good approximation \cite{nemhauser1978analysis}. However, the greedy algorithm demands an efficient oracle of the objective function $f^*(S)$ which is \#P-hard to compute \cite{chen2010scalable}. Therefore,  it is hard to directly maximize $f^*(S)$ and, alternatively, we can first construct a good estimator of $f^*(S)$ by sampling methods, and then maximizing the obtained estimator. Towards this end, we design a novel sampling process, shown in Alg. \ref{alg: whole_sample}, which includes the following two parts: 

\textbf{Forward Sampling (line 3).} First, we ignore the positive cascade and simulate the diffusion process of the misinformation from $S_r$. When the simulation terminates, we output the immediate realization $g^*$ as well as the set $V_r$ of the nodes that are $C_r$-active. The $g^*$ produced in this step is usually a partial realization.

\textbf{Reverse Sampling (line 4-7).} Let $V_r$ and $g^*$ be the result of the first step. We consider the nodes in $V_r$ one by one in an arbitrary order and run Alg. \ref{alg: reverse_v} for each $v \in V_r$. In Alg. \ref{alg: reverse_v}, given the node $v$ and a realization $g$, we simulate the diffusion from $v$ in the reverse direction in the manner of BFS until a misinformation seed is reached. When an edge $e$ is encountered, if the state of $e$ is known in $g$ (i.e., $e \in g(E_t)\cup g(E_f)$), we keep its state; otherwise, we sample the state of $e$ and update $g$. During this process, we collect the nodes that are traversed before the misinformation seed is reached. We start with the realization $g^*$ obtained in the first step and any node in $V_r$, and repeat this process for the rest nodes in $V_r$ with the updated realization. In each run of Alg. \ref{alg: reverse_v}, a misinformation node must be reached due to the construction of $g^*$ and $V_r$. It is important to note that for each edge we sample its state for at most one time during the whole process in Alg. \ref{alg: whole_sample}. 

\begin{algorithm}[t]
\caption{Hybrid Sampling}
\label{alg: whole_sample}
\begin{algorithmic}[1]
\State \textbf{Input:} $G=(V, E)$ and $S_r$; 
\State \textbf{Output:} a collection $\P$ of subsets of $V$.
\State Simulate the diffusion of the misinformation from $S_r$. Let $V_r$ be the $C_r$-active nodes and $g^*$ be the realization when the simulation terminates;
\State $\P \leftarrow \emptyset$; 
\For {each $v \in V_r$}
\State $(P, g^*) \leftarrow$ Alg. \ref{alg: reverse_v} $(G,g^*,S_r,v)$;
\State $\P= \P \cup \{P\}$;
\EndFor
\State return $\P$; 
\end{algorithmic}
\end{algorithm}

\begin{algorithm}[t]
\caption{Reverse Sampling from $v$}
\label{alg: reverse_v}
\begin{algorithmic}[1]
\State \textbf{Input:} $G=(V, E)$, $g=(g(E_t), g(E_f))$, $S_r$ and $v$; 
\State \textbf{Output:} $P \subseteq V$ and an updated realization $g$; 
\State $P^* \leftarrow \{v\}$, $P \leftarrow \emptyset$;
\While {true}
	\If {$P^* \cap S_r \neq \emptyset$}
		\State return ($P$, $g$);
	\Else
		\State $P \leftarrow P \cup P^*$; 
		\State $E^* \leftarrow \{(u_1,u_2)| u_2 \in P^*, u_1 \in V \setminus P\}$;
		\State $P^* \leftarrow \emptyset$;
		\For {each edge $(u_1,u_2) \in E^*$}
			\If {$(u_1,u_2) \in g(E_t)$}
				\State  $P^* \leftarrow P^* \cup \{u_2\}$;
			\Else 
				\If {$(u_1,u_2) \notin g(E_f)$}
					\State $rand \leftarrow \mathcal{U}(0,1)$; \Comment{(0,1) uniform distribution} 
					\If {$rand  \leq p_{(u_1,u_2)}$}
						\State $P^* \leftarrow P^* \cup \{u_2\}$;
						\State $g(E_t) \leftarrow g(E_t) \cup \{(u_1,u_2)\}$;
					\Else
						\State $g(E_f) \leftarrow g(E_f) \cup \{(u_1,u_2)\}$;
					\EndIf 
				\EndIf 
			\EndIf 
		\EndFor 
	\EndIf 
\EndWhile 
\end{algorithmic}
\end{algorithm}

For each $V_1,V_2 \subseteq V$, we use \[y(V_1, V_2) \define
  \begin{cases}
  1 &  \hspace{0mm} \hspace{-0.5mm} \text{$V_1 \cap V_2 \neq \emptyset$} \\
  0 & \hspace{0mm} \hspace{-0.5mm} \text{otherwise } 
  \end{cases}\] 
to denote if the intersection of $V_1$ and $V_2$ is empty. Each run of Alg. \ref{alg: whole_sample} gives a family $\P$ of subsets of $V$. For each $S\subseteq V$, define $x(\P, S)$ as  $x(\P, S) \define \sum_{P \in \P} y(P, S)$.
The following lemma shows that $x(\P, S)$ is an unbiased estimator of $f^*(S)$.

\begin{lemma}
\label{lemma: key}
For each $S \subseteq V$, $\E[x(\P, x)]=f^*(S)$.
\end{lemma}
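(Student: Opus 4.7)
The plan is to prove $\E[x(\P,S)]=f^*(S)$ by matching, on a per-node and per-realization basis, each contribution to the left-hand side with a summand of the expansion~(\ref{eq: expand_f_star}) of $f^*(S)$. The backbone of the argument is a coupling between the lazy, on-demand edge sampling performed by Alg.~\ref{alg: whole_sample} and Alg.~\ref{alg: reverse_v} and a single full realization drawn up front.

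First, I would invoke the principle of deferred decisions to argue that the combined random process of Alg.~\ref{alg: whole_sample} is distributionally equivalent to the following coupled process: draw a full realization $g\in\G$ with probability $\Pr[g]$ in advance, and then run deterministic versions of both algorithms that read each edge's state directly from $g$. This is valid because the state of every edge is queried at most once during the entire execution (guaranteed by the checks against $g(E_t)\cup g(E_f)$ on lines~12 and~15 of Alg.~\ref{alg: reverse_v}), and whenever an edge is queried for the first time it is drawn from an independent Bernoulli with parameter $\Pr[e]$, which is exactly the marginal distribution of the $g(E_t)$ indicator on that edge.

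Next, I would fix an arbitrary $g\in\G$ and describe the deterministic output of the algorithm under this coupling. The forward phase is the standard IC simulation from $S_r$, so $V_r$ is the set of $C_r$-active nodes in $g$, namely $\{v:\dis_g(S_r,v)<+\infty\}$. For any $v\in V_r$, Alg.~\ref{alg: reverse_v} performs a reverse BFS from $v$ along the live edges of $g$, expanding one frontier per iteration and halting the first time the current frontier meets $S_r$. Since $P^*$ is merged into $P$ only when $P^*\cap S_r=\emptyset$, the set ultimately returned is exactly $P_v(g)=\{u\in V:\dis_g(u,v)<\dis_g(S_r,v)\}$. Reusing the updated realization across successive reverse samplings does not change this identification, since any previously fixed edge state already agrees with $g$. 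Hence $y(P_v(g),S)=1$ iff some $u\in S$ satisfies $\dis_g(u,v)<\dis_g(S_r,v)$, which together with $v\in V_r$ is precisely the characterization $g\in\G^*(v,S)$. Summing and using~(\ref{eq: expand_f_star}),
\begin{equation*}
\E[x(\P,S)]=\sum_{g\in\G}\Pr[g]\sum_{v\in V}\mathbf{1}[g\in\G^*(v,S)]=\sum_{v\in V}\sum_{g\in\G^*(v,S)}\Pr[g]=f^*(S).
\end{equation*}

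The step I expect to be the main obstacle is making the deferred-decision coupling fully rigorous: one must verify that the (data-dependent) set of edges ever queried across the forward BFS and the sequence of reverse BFS calls, together with their observed states, has the same joint law as the corresponding marginal of $g\sim\Pr[\cdot]$, even though which edges get queried depends on earlier random outcomes. Once this coupling is justified, the remaining work -- the deterministic identifications of $V_r(g)$ and $P_v(g)$ and the conversion of $y(P_v,S)$ into $\mathbf{1}[g\in\G^*(v,S)]$ -- is routine in view of Lemma~\ref{lemma: condition_g} and the definition of $\G^*(v,S)$.
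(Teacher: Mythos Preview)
Your proposal is correct and follows essentially the same route as the paper's own proof: couple the on-demand edge sampling with a full realization $g$ drawn in advance (deferred decisions), identify $V_r=\{v:\dis_g(S_r,v)<+\infty\}$ and $P_v(g)=\{u:\dis_g(u,v)<\dis_g(S_r,v)\}$, convert $y(P_v,S)=1$ into $g\in\G^*(v,S)$, and sum via Eq.~(\ref{eq: expand_f_star}). If anything, your treatment is slightly more careful in flagging the data-dependent query set as the point where the coupling must be justified, whereas the paper dispatches this in one sentence.
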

\begin{proof}
The randomness of the sampling process in Alg. \ref{alg: whole_sample} again comes from the random states of the edges. Although Alg. \ref{alg: reverse_v} are possibly called several times during Alg. \ref{alg: whole_sample}, we only update the realization $g^*$ obtained in line 3 and no edge is sampled more than once. Thus, it is equivalent to that we first sample the states of all the edges to obtain a full-realization $g$ and then run the sampling process with $g$. As a result, the basic event space is again $\G$, and therefore we can express $E[x(\P, S)]$ as $E[x(\P, S)]=\sum_{g \in \G} \Pr[g]\cdot x(\P_g, S)$ where $\P_g$ is output of Alg. \ref{alg: whole_sample} when the realization is fixed as $g$. When the realization is fixed as $g$, we use $V_r^g$ to denote the $V_r$ obtained in line 3 of Alg. \ref{alg: whole_sample}, and let $P_v^g$ be subset returned by Alg. \ref{alg: reverse_v} with an input $v$. Now we have 
\begin{eqnarray}
\label{eq: expand}
E[x(\P, S)]&=&\sum_{g \in \G} \Pr[g]\cdot x(\P_g, S) \nonumber\\
&=&\sum_{g \in \G} \Pr[g]\sum_{v \in V_r^g} y(P_v^g, S)
\end{eqnarray}
Note that $v$ is contained in $V_r^g$ if and only if $v$ is $C_r$-active in $g$ when there is no positive cascade. That is, $v \in V_r^g$ if and only if $\dis_g(C_r,v) \neq +\infty$. Thus, we have
\begin{eqnarray*}
E[x(\P, S)]&=& \sum_{g \in \G} \Pr[g]\sum_{v \in V_r^g} y(P_v^g, S)\\
&=&\sum_{v \in V}\sum_{g \in \G^*(v, \emptyset)} \Pr[g]\cdot y(P_v^g, S)
\end{eqnarray*}
Furthermore, in Alg. \ref{alg: reverse_v} we search the nodes from $v$ in a reverse direction in the manner of BFS, so $P_v^g$ contains exactly the node(s) $u$ with a distance to $v$ shorter than $\dis_g(S_r,v)$. Therefore, $y(P_v^g, S)=1$ if and only if $\dis_g(S,v)<\dis_g(S_r,v)$, which means $g \in \G^*(v,S)$. Since $\G^*(v,S) \subseteq \G^*(v,\emptyset)$, we have
\begin{eqnarray*}
E[x(\P, S)]&=&\sum_{v \in V}\sum_{g \in \G^*(v, \emptyset)} \Pr[g]\cdot y(P_v^g, S)\\
&=&\sum_{v \in V}\sum_{g \in \G^*(v, S)} \Pr[g]\cdot 1\\
&&\{\text{By Eq. (\ref{eq: expand_f_star})}\}\\
&=&f^*(S).
\end{eqnarray*} 
Thus, proved.
\end{proof} 

For convenience, we call each family $\P$ of the subsets of $V$ produced by Alg. \ref{alg: whole_sample} as an R-sample. The complexity of generating one R-sample is given as follows.

\begin{lemma}
\label{lemma: cost-R}
The expected running time of generating one R-sample is $O(m+m\cdot \max_1)$ where $\max_1 \define\max f^* (\{v\} )$.
\end{lemma}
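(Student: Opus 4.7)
The plan is to separate the cost of the two stages (forward then reverse) and then bound the total work of the reverse stage by a double-counting argument that relates edges examined to the prevention value of a single seed.

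First I would handle the forward sampling step in Alg.~\ref{alg: whole_sample} (line 3). Simulating the misinformation from $S_r$ is an ordinary BFS in which each node becomes $C_r$-active at most once and each outgoing edge of a $C_r$-active node is inspected at most once, so this stage uses $O(m)$ time and accounts for the additive $O(m)$ term. I would also remark that the state of any edge examined here is sampled at most once and stored in $g^*$, so the reverse stage never resamples it.

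Next I would fix a full realization $g$ and analyze the deterministic work spent in the for-loop (lines 5--7), then take expectation over $g$. Because Alg.~\ref{alg: reverse_v} performs a reverse BFS from $v$ that halts as soon as a node of $S_r$ is enqueued, the set $P$ it returns is exactly $P_v^g=\{u:\dis_g(u,v)<\dis_g(S_r,v)\}$ (as already used in the proof of Lemma~\ref{lemma: key}). The edges inspected during the BFS from $v$ are precisely the in-edges of the nodes placed into $P_v^g$, so the work is bounded by $\sum_{u\in P_v^g} d^-(u)$, where $d^-(u)$ is the in-degree of $u$. Summing over $v\in V_r^g$ and swapping the order of summation gives
\begin{equation*}
W(g)\;\le\;\sum_{v\in V_r^g}\sum_{u\in P_v^g} d^-(u)
\;=\;\sum_{u\in V} d^-(u)\cdot \bigl|\{v\in V_r^g:\, u\in P_v^g\}\bigr|.
\end{equation*}

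The key observation is that, by Lemma~\ref{lemma: condition_g} and the definition of $\G^*(\cdot,\cdot)$, the condition $v\in V_r^g$ together with $u\in P_v^g$ is equivalent to $g\in\G^*(v,\{u\})$, i.e.\ seeding $\{u\}$ would save $v$ in realization $g$. Therefore $|\{v\in V_r^g: u\in P_v^g\}|$ equals the per-realization prevention score $f^*(\{u\},g)$, and taking expectation with Eq.~(\ref{eq: expand_f_star}) yields $\E[W]\le\sum_u d^-(u)\cdot f^*(\{u\})\le \max_1\sum_u d^-(u)=m\cdot\max_1$. Combining with the $O(m)$ forward cost gives the stated $O(m+m\cdot\max_1)$ bound.

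The only mildly subtle point, and the step I would take the most care over, is justifying the edge-counting cleanly: I must argue that each edge inspection in Alg.~\ref{alg: reverse_v} is $O(1)$ whether the edge is already recorded in $g(E_t)\cup g(E_f)$ or is being freshly sampled, and that charging each examined edge $(u_1,u_2)$ to its head $u_2\in P_v^g$ does not undercount (the BFS may skip $u_1$'s if $u_1\in P$, but this only decreases the work, preserving the upper bound). Once this bookkeeping is in place, the double-counting-plus-expectation argument above delivers the lemma.
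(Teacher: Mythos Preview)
Your proposal is correct and follows essentially the same approach as the paper: both condition on the full realization $g$, bound the forward stage by $O(m)$, charge each reverse-BFS edge inspection to the endpoint lying in $P_v^g$, swap the order of summation, and identify the resulting inner expectation with $f^*(\{u\})$ (via Eq.~(\ref{eq: expand}) / Eq.~(\ref{eq: expand_f_star})) to obtain the $m\cdot\max_1$ bound. Your write-up is in fact a bit more careful than the paper's about which endpoint of the edge is charged and why the skipped edges $(u_1,u_2)$ with $u_1\in P$ only help the bound.
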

\begin{proof}
Let $T$ be the expected running time of Alg. \ref{alg: whole_sample}, $T_{g}$ be the running time when the realization is fixed as $g$. When the realization is fixed as a full-realizatoin $g$, for each $v \in V$, we use $TIME_g$ to denote the running time of line 3, and $TIME_g^v$ to denote the running time of Alg. \ref{alg: reverse_v} with input $v$. Following the proof of Lemma \ref{lemma: key}, we have the notations $V_r^g$ and $P_v^g$.

According to the analysis  in proof of Lemma \ref{lemma: key}, we have 
\[T=\sum_{g \in \G} \Pr[g]\cdot T_{g}=\sum_{g \in \G} \Pr[g]\cdot (TIME_g + \sum_{v \in V_r^g} TIME_g^v).\]
Because $TIME_g$ is asymptotically bounded by the number of the traversed edges before the simulation terminates, we have 
\[\sum_{g \in \G} \Pr[g]\cdot TIME_g \leq \sum_{g \in \G} \Pr[g]\cdot m=O(m).\] Furthermore, according to Alg. \ref{alg: reverse_v}, an edge $(u, v)$ will be checked in Alg. \ref{alg: reverse_v} if and only if $u \in P_g^v$, i.e., $y(P_v^g, \{u\})=1$. Thus, we have \[TIME_g^v=\sum_{ (u,v)\in E}y(P_v^g, \{u\}),\] and therefore,
\[T=O(m)+\sum_{(u,v)\in E}\sum_{g \in \G} \Pr[g]\cdot \sum_{v \in V_r^g} y(P_v^g, \{u\}).\]
For each $u \in V$, by Eq. (\ref{eq: expand}), $\sum_{g \in \G} \Pr[g]\cdot \sum_{v \in V_r^g} y(P_v^g, \{u\})$ is equal to $E[x(\P, \{u\})]$ which is actually $f^*(\{u\})$. As a result, we have \[T \leq m \cdot \Big(1+\max\big(f^*(\{u\})\big)\Big).\] Thus, proved.
\end{proof}
 
\begin{algorithm}[t]
\caption{Greedy}
\label{alg: greedy}
\begin{algorithmic}[1]
\State \textbf{Input:} $\R_l$, $V$, and $k \in \mathbb{Z}^+$; 
\State $S^* \leftarrow \emptyset$;
\While {$|S^*| < k$}
\State $v^* \leftarrow \argmax_{v \in V} \overline{x}(\R_l, S^*\cup \{v\})$;
\State $S^* \leftarrow S^* \cup \{v\}$;
\EndWhile
\Return $S^*$;
\end{algorithmic}
\end{algorithm}

\section{The Algorithm}
\label{sec: algortihm}
In this section, we present the algorithm designed based on the hybrid sampling method for solving the MP problem and provide theoretical analysis.

\subsection{The Idea.} 
According to Lemma \ref{lemma: key}, $x(\P, S)$ is an unbiased estimator of $f_2(S)$. Therefore, if we generate a collection $\R_l=\{\P_1,...,\P_l\}$ of $l$ R-samples by running Alg. \ref{alg: whole_sample}, \[\overline{x}(\R_l,S)\coloneqq \frac{\sum_{\P \in \R_{l}} x(\P, S)}{l}\] should be an accurate estimate of $f^*(S)$ when $l$ is sufficiently large. As a result, the subset $S \subseteq V$ that can maximize $\overline{x}(\R_l,S)$ is intuitively a good solution to maximizing $f^*(S)$. 

\subsection{Maximizing $\overline{x}(\R_l,S)$} 
Let us first consider that how to maximize $\overline{x}(\R_l,S)$. The following result shows that $\overline{x}(\R_l,S)$ is monotone nondecreasing and submodular. 
\begin{lemma}
\label{lemma: submodular}
For each $\R_l$, $\overline{x}(\R_l, S)$ is monotone nondecreasing and submodular.
\end{lemma}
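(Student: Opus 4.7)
The plan is to reduce the statement to a standard fact about coverage functions by observing the structure of $\overline{x}(\R_l, S)$ as a non-negative linear combination. Specifically, by definition,
\[
\overline{x}(\R_l, S) \;=\; \frac{1}{l}\sum_{\P \in \R_l} x(\P, S) \;=\; \frac{1}{l}\sum_{\P \in \R_l}\sum_{P \in \P} y(P, S).
\]
Since monotonicity and submodularity are both preserved under non-negative linear combinations, it suffices to prove that for each fixed subset $P \subseteq V$ the function $S \mapsto y(P, S)$ is monotone nondecreasing and submodular. So I would reduce the problem to this single claim and then verify the claim directly.

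For monotonicity, the argument would just unfold the definition: if $S \subseteq T \subseteq V$ and $P \cap S \neq \emptyset$, then $P \cap T \supseteq P \cap S \neq \emptyset$, hence $y(P, S) = 1$ implies $y(P, T) = 1$. This gives $y(P, S) \leq y(P, T)$.

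For submodularity, I would fix $S \subseteq T \subseteq V$ and $v \in V \setminus T$, and show
\[
y(P, S \cup \{v\}) - y(P, S) \;\geq\; y(P, T \cup \{v\}) - y(P, T).
\]
Both marginals lie in $\{0,1\}$, so the only way the inequality can fail is if the right-hand side equals $1$ and the left-hand side equals $0$. The right-hand side being $1$ forces $v \in P$ and $T \cap P = \emptyset$; but $S \subseteq T$ then gives $S \cap P = \emptyset$, and with $v \in P$ we also get $y(P, S \cup \{v\}) = 1$, so the left-hand side is also $1$, a contradiction. Hence the inequality holds.

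There is no real obstacle here: the argument is entirely mechanical once we observe that each $y(P, \cdot)$ is a single-set coverage indicator and that $\overline{x}(\R_l, \cdot)$ is a non-negative combination of such indicators (this is exactly the same reasoning that underlies submodularity of the standard RIS/RR-set objective in \cite{borgs2014maximizing,tang2014influence}). The only thing worth flagging is that Alg.~\ref{alg: whole_sample} can produce multiple subsets $P$ within a single R-sample $\P$, but this causes no trouble because the outer sum over $P \in \P$ is already a non-negative combination preserving both properties.
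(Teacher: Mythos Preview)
Your proposal is correct and follows essentially the same approach as the paper: both decompose $\overline{x}(\R_l,\cdot)$ as a non-negative sum of the indicators $y(P,\cdot)$, argue monotonicity directly, and then verify the diminishing-returns inequality for $y(P,\cdot)$ by the same case analysis (if the marginal at the larger set is $1$, then $v\in P$ and $P\cap T=\emptyset$, forcing the marginal at the smaller set to be $1$ as well). Your write-up is in fact slightly cleaner than the paper's, which contains a harmless typo in this step.
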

\begin{proof}
Recall that \[\overline{x}(\R_l, S)=\frac{\sum_{\P \in \R_l}x(\P,S)}{l}=\frac{\sum_{\P \in \R_l}\sum_{P \in \P}y(P,S)}{l}.\] It is monotone nondecreasing because adding a new node to $S$ does not decrease $y(P,S)$. Since the submodularity is preserved under addition, it suffices to show that $y(P,S)$ is submodular with respect to $S$. That is, for each $P \subseteq V$, $S_1 \subseteq S_2 \subseteq V$ and $v \notin S_2$, we have 
\[y(P,S_1 \cup \{v\})-y(P, S_1) \geq y(P, S_2 \cup \{v\})-y(P, S_2).\]
Because $y$ is either 0 or 1 and it is monotone, it suffices to prove that $y(P, S_1 \cup \{v\})-y(P, S_1)$ is equal 1 whenever $y(P, S_2 \cup \{v\})-y(P, S_2)$ is equal to 1. When $y(P, S_2 \cup \{v\})-y(P, S_2)$ is equal 1, we have $y(P, S_2 \cup \{v\})=1$ and $y(P, S_2)=1$. That is, $P \cap (S_2 \cup \{v\}) \neq \emptyset$ and $P \cap S_2 = \emptyset$, which implies that $v \in P$. As a result, $P \cap S_1 = \emptyset$ and $P \cap (S_1 \cup \{v\}) \neq \emptyset$. Therefore, $y(P, S_1 \cup \{v\})-y(P, S_1)$ is also equal to 1.
\end{proof}

Since $\overline{x}(\R_l, S)$ is monotone and submodular, due to the classic result in \cite{nemhauser1978analysis}, the greedy  algorithm given in Alg. \ref{alg: greedy} provides a $(1-1/e)$-approximation for maximizing $\overline{x}(\R_l, S)$.

\begin{lemma}
\label{lemma: greedy}
For each $\R_l=\{\P_1,..., \P_l\}$ and $S \subseteq V$ with $|S|\leq k$, the $S^*$ produced by Alg. \ref{alg: greedy} satisfies that $\overline{x}(\R_l, S^*) \geq (1-1/e)\cdot \overline{x}(\R_l, S)$.
\end{lemma}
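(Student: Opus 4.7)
The plan is to reduce the claim to the classical Nemhauser--Wolsey--Fisher theorem, which guarantees a $(1-1/e)$-approximation for the greedy algorithm when maximizing a normalized, monotone nondecreasing, submodular set function under a cardinality constraint. The key structural ingredients have already been established: Lemma \ref{lemma: submodular} shows that $\overline{x}(\R_l, \cdot)$ is monotone nondecreasing and submodular on $2^V$. The only small detail left to verify is the normalization $\overline{x}(\R_l, \emptyset) = 0$, which follows immediately from $y(P, \emptyset) = 0$ for every $P$. Once these three properties are in hand, Alg. \ref{alg: greedy} is precisely the standard greedy procedure applied to $\overline{x}(\R_l, \cdot)$, so the theorem of \cite{nemhauser1978analysis} applies verbatim and yields the desired inequality $\overline{x}(\R_l, S^*) \geq (1-1/e)\cdot \overline{x}(\R_l, S)$ for every $S$ with $|S|\leq k$.

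If a self-contained argument is preferred over a citation, I would unroll the usual two-line inductive proof. Letting $S_i$ denote the greedy set after iteration $i$ and fixing any comparison set $S$ with $|S|\leq k$, a standard application of monotonicity and submodularity to the chain $S_i \subseteq S_i \cup S$ bounds $\overline{x}(\R_l,S) - \overline{x}(\R_l,S_i)$ by the sum over $v \in S$ of the marginal gains of adding $v$ to $S_i$, hence by $k$ times the maximum such marginal, and since the greedy rule selects an element with marginal gain at least this maximum, one obtains the recurrence
\[
\overline{x}(\R_l, S) - \overline{x}(\R_l, S_{i+1}) \leq \Bigl(1 - \tfrac{1}{k}\Bigr)\bigl(\overline{x}(\R_l, S) - \overline{x}(\R_l, S_i)\bigr).
\]
Iterating $k$ times starting from $\overline{x}(\R_l, S_0)=0$ delivers the $1-(1-1/k)^k \geq 1-1/e$ factor.

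There is no genuine obstacle here: Lemma \ref{lemma: submodular} has already supplied all of the nontrivial content, and what remains is a textbook reduction. The only stylistic decision is whether to quote \cite{nemhauser1978analysis} in a single sentence or to include the short recurrence above for completeness.
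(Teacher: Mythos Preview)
Your proposal is correct and matches the paper's own treatment: the paper does not give a separate proof of this lemma but simply invokes the classical result of \cite{nemhauser1978analysis} after Lemma~\ref{lemma: submodular} has established monotonicity and submodularity. Your additional observation that $\overline{x}(\R_l,\emptyset)=0$ and your optional unrolled recurrence are fine extras, but the paper itself is content with the one-line citation.
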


\begin{algorithm}[t]
\caption{Framework}
\label{alg: framework}
\begin{algorithmic}[1]
\State \textbf{Input:} $G=(V, E)$, $S_r$, $k$,  and $l$; 
\State Generate $l$ R-samples $\R_l=\{\P_1,...,\P_l\}$ by Alg. \ref{alg: whole_sample};
\State Obtain an $S^*$ by Alg. \ref{alg: greedy} with $\R_l$, $V$ and $k$; 
\State \textbf{Return} $S^*$;
\end{algorithmic}
\end{algorithm}

\subsection{The Framework}
By taking Alg. \ref{alg: greedy} as a subroutine, the framework of our algorithm is shown in Alg. \ref{alg: framework}. Given a threshold $l \in \mathbb{Z}^+$, we first generate $l$ R-samples and then run the greedy algorithm to obtain a seed set $S^*$ of the positive cascade. Now the only part left is to determine the number of R-samples to be used. We will discuss this problem in the next subsection.

\subsection{Determining $l$}
The goal is to set $l$ sufficiently large in order to ensure the theoretical guarantees. Since our algorithm is randomized, given any parameters $\epsilon \in (0,1)$ and $N>1$, the goal is to produce an $S^*$ such that \[f^*(S^*) \geq (1-1/e-\epsilon) f^* (S_{\opt})\] holds with probability at least $1-1/N$.\footnote{As discussed in \cite{budak2011limiting}, $1-1/e$ is the best possible approximation ratio.} Throughout our analysis, we assume that $\epsilon$ and $N$ are fixed, and we aim to achieve an arbitrary small error $\epsilon$. We use the Chernoff bound \cite{motwani2010randomized} to analyze the accuracy of the estimates.

\begin{lemma}[\textbf{Chernoff Bound}]
Let $X_1, ..., X_l \in [0,1]$ be $l$ i.i.d random variables where $E(X_i)=\mu$. The Chernoff bound states that 
\begin{equation}
\label{eq:chernoff_1}
\mathrm{Pr}\Big[\sum X_i - l \cdot \mu \geq \delta \cdot l \cdot  \mu  \Big]\leq \exp(-\frac{l \cdot \mu \cdot \delta^2}{2+\delta}),
\end{equation}
and
\begin{equation}
\label{eq:chernoff_2}
\mathrm{Pr}\Big[\sum X_i - l \cdot \mu \leq -\delta \cdot l \cdot \mu \Big]\leq \exp(-\frac{l \cdot \mu \cdot \delta^2}{2}),
\end{equation}  
for each $\delta >0$.
\end{lemma}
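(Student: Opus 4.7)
The plan is to use the classical Chernoff/moment-generating-function (MGF) derivation. Writing $S\coloneqq \sum_{i=1}^{l}X_i$, for any $t>0$ Markov's inequality gives $\Pr[S\geq a]\leq e^{-ta}\E[e^{tS}]$, and independence factors $\E[e^{tS}]=\prod_{i}\E[e^{tX_i}]$. Since each $X_i\in[0,1]$, convexity of $x\mapsto e^{tx}$ on $[0,1]$ yields the pointwise bound $e^{tX_i}\leq 1+X_i(e^{t}-1)$, and taking expectations gives $\E[e^{tX_i}]\leq 1+\mu(e^{t}-1)\leq \exp\!\big(\mu(e^{t}-1)\big)$. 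Therefore $\E[e^{tS}]\leq \exp\!\big(l\mu(e^{t}-1)\big)$, which is the single ingredient that drives both tail bounds.

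For the upper tail (\ref{eq:chernoff_1}), I would set $a=(1+\delta)l\mu$ and optimize the resulting bound $\exp\!\big(l\mu[-t(1+\delta)+e^{t}-1]\big)$ over $t>0$; the stationary condition gives $t=\ln(1+\delta)$, which after substitution yields
\[\Pr\!\big[S\geq (1+\delta)l\mu\big]\leq \exp\!\Big(-l\mu\big[(1+\delta)\ln(1+\delta)-\delta\big]\Big).\]
For the lower tail (\ref{eq:chernoff_2}), I would apply the MGF argument to $-S$ with $t>0$ and threshold $-(1-\delta)l\mu$; the optimum $t=-\ln(1-\delta)$ gives
\[\Pr\!\big[S\leq (1-\delta)l\mu\big]\leq \exp\!\Big(-l\mu\big[(1-\delta)\ln(1-\delta)+\delta\big]\Big),\]
valid for $\delta\in(0,1)$ (the case $\delta\geq 1$ is vacuous since $S\geq 0$).

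The remaining and only non-mechanical step is to simplify these exponents to the clean forms in the statement via the two elementary inequalities
\[(1+\delta)\ln(1+\delta)-\delta \;\geq\; \frac{\delta^{2}}{2+\delta},\qquad (1-\delta)\ln(1-\delta)+\delta \;\geq\; \frac{\delta^{2}}{2},\]
valid for $\delta>0$ in the first case and $\delta\in(0,1)$ in the second. I would verify each by letting $h(\delta)$ denote the difference of the two sides, observing $h(0)=0$, and showing $h'(\delta)\geq 0$ by direct differentiation (for the first, this reduces to $\ln(1+\delta)\geq 2\delta/(2+\delta)$, itself standard). This algebraic cleanup is the main obstacle in the sense that everything before it is textbook; the MGF bound, the optimization over $t$, and the factorization via independence are all rote, while the conversion into the $\delta^{2}/(2+\delta)$ and $\delta^{2}/2$ forms is where one has to be careful to keep the inequality direction and the domain of $\delta$ straight.
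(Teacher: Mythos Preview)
The paper does not prove this lemma at all: it is stated as a standard tool and attributed to \cite{motwani2010randomized}, with no argument given. So there is no ``paper's own proof'' to compare against; the lemma is simply imported.

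Your derivation is the standard moment-generating-function proof and is correct in outline and in its key steps: the convexity bound $e^{tX_i}\le 1+X_i(e^t-1)$ for $X_i\in[0,1]$, factorization by independence, optimization over $t$ to get the Bennett-type exponents, and then the two elementary simplifications to the polynomial forms. One small remark: in your sketch you say the first simplification ``reduces to $\ln(1+\delta)\ge 2\delta/(2+\delta)$,'' but if you differentiate $h(\delta)=(1+\delta)\ln(1+\delta)-\delta-\delta^2/(2+\delta)$ directly you actually obtain $h'(\delta)=\ln(1+\delta)-\delta(4+\delta)/(2+\delta)^2$, not $\ln(1+\delta)-2\delta/(2+\delta)$. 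The inequality $h'(\delta)\ge 0$ is still true (and equivalent, after one more differentiation, to a simple rational-function comparison), so the conclusion stands; just be careful not to quote the wrong intermediate form in a write-up. The lower-tail simplification goes through exactly as you describe, since $h''(\delta)=\delta/(1-\delta)\ge 0$ there.
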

In particular, we need $\overline{x}(\R_l, S)$ to be the accurate estimates for the solution $S^*$ produced by the greedy algorithm as well as for the optimal solution $S_{\opt}$. We use $\epsilon_{11}>0$ and $\epsilon_{12}>0$ to control the accuracy of $\overline{x}(\R_l, S^*)$ and $\overline{x}(\R_l, S_{\opt})$, respectively. The following conditions will be sufficient.

\begin{align}
\label{eq: cond_1}
&\overline{x}(\R_l, S^*)-f^*(S^*) \leq \epsilon_{11} (1+\epsilon)  f^*(S_{\opt})  \\
\label{eq: cond_2}
&\overline{x}(\R_l, S_{\opt})- f^*(S_{\opt}) \geq -\epsilon_{12} (1+\epsilon)  f^*(S_{\opt})  \\
\label{eq: cond_3}
&(1-\epsilon_{12}(1+\epsilon))(1-1/e)-\epsilon_{11}(1+\epsilon)=1-1/e-\epsilon \\
\label{eq: cond_4}
&\epsilon_{11}, \epsilon_{12} \in (0,1) 
\end{align}

\begin{lemma}
\label{lemma: condition}
Whenever Eqs. (\ref{eq: cond_1})-(\ref{eq: cond_4}) hold, we have \[f^*(S^*) \geq (1-1/e-\epsilon) f^* (S_{\opt}).\]
\end{lemma}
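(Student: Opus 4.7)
The plan is a short chain of inequalities linking $f^*(S^*)$ back to $f^*(S_{\opt})$ via the estimator $\overline{x}(\R_l,\cdot)$, using Lemma \ref{lemma: greedy} as the one non-algebraic ingredient. The strategy is: first lower-bound $f^*(S^*)$ by $\overline{x}(\R_l,S^*)$ using (\ref{eq: cond_1}); then lower-bound $\overline{x}(\R_l,S^*)$ by $\overline{x}(\R_l,S_{\opt})$ using Lemma \ref{lemma: greedy} (noting that $S_{\opt}$ is a feasible set for the greedy maximization of $\overline{x}$ since $|S_{\opt}|\le k$); then lower-bound $\overline{x}(\R_l,S_{\opt})$ by $f^*(S_{\opt})$ using (\ref{eq: cond_2}); and finally invoke (\ref{eq: cond_3}) to collapse the constant in front of $f^*(S_{\opt})$ into $1-1/e-\epsilon$.

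Concretely, I would first rewrite (\ref{eq: cond_1}) as
\[
f^*(S^*)\ \ge\ \overline{x}(\R_l, S^*)-\epsilon_{11}(1+\epsilon)\,f^*(S_{\opt}),
\]
then apply Lemma \ref{lemma: greedy} with $S=S_{\opt}$ to get
\[
\overline{x}(\R_l,S^*)\ \ge\ (1-1/e)\,\overline{x}(\R_l,S_{\opt}),
\]
and substitute (\ref{eq: cond_2}) in the form
\[
\overline{x}(\R_l,S_{\opt})\ \ge\ \bigl(1-\epsilon_{12}(1+\epsilon)\bigr)\,f^*(S_{\opt}).
\]
Chaining these three gives
\[
f^*(S^*)\ \ge\ \Bigl[(1-\epsilon_{12}(1+\epsilon))(1-1/e)-\epsilon_{11}(1+\epsilon)\Bigr] f^*(S_{\opt}),
\]
and applying (\ref{eq: cond_3}) to the bracketed factor yields the claim.

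The only subtlety worth flagging is that (\ref{eq: cond_4}) is needed implicitly to ensure that the factor $1-\epsilon_{12}(1+\epsilon)$ is nonnegative so that multiplying through by $(1-1/e)$ preserves the direction of the inequality; without (\ref{eq: cond_4}) the algebra could in principle flip. I would mention this briefly. There is no real obstacle here: Lemma \ref{lemma: submodular} and Lemma \ref{lemma: greedy} have already done the combinatorial work, and everything else is bookkeeping around the two-sided concentration bounds that (\ref{eq: cond_1}) and (\ref{eq: cond_2}) express. The harder job — which belongs to a subsequent lemma — is to show that $l$ can be chosen so that (\ref{eq: cond_1}) and (\ref{eq: cond_2}) actually hold with probability at least $1-1/N$, but that is outside the scope of this statement.
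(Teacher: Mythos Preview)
Your proof is correct and mirrors the paper's own argument step for step: rearrange (\ref{eq: cond_1}), apply Lemma~\ref{lemma: greedy} with $S=S_{\opt}$, substitute (\ref{eq: cond_2}), and collapse the resulting constant via (\ref{eq: cond_3}). One small note: your caveat about (\ref{eq: cond_4}) is not actually needed for the chain to go through, since the relevant step multiplies the inequality $\overline{x}(\R_l,S_{\opt})\ge(1-\epsilon_{12}(1+\epsilon))f^*(S_{\opt})$ by the positive constant $1-1/e$, which preserves direction regardless of the sign of $1-\epsilon_{12}(1+\epsilon)$.
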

\begin{proof}
It can be proved by simple rearrangements. By Eq. (\ref{eq: cond_1}), we have \[f^*(S^*) \geq \overline{x}(\R_l, S^*)-\epsilon_{11} (1+\epsilon)  f^*(S_{\opt}).\] 
Due to Lemma \ref{lemma: greedy}, we have $\overline{x}(\R_l, S^*) \geq (1-1/e) \overline{x}(\R_l, S_{\opt})$ and therefore, 
\[f^*(S^*) \geq (1-1/e) \overline{x}(\R_l, S_{\opt})-\epsilon_{11} (1+\epsilon)  f^*(S_{\opt}).\] 
Finally, by Eqs. (\ref{eq: cond_2}) and (\ref{eq: cond_3}), we have $f^*(S^*) \geq (1-1/e)(1-\epsilon_{12}(1+\epsilon))  f^*(S_{\opt})- \epsilon_{11} (1+\epsilon)  f^*(S_{\opt}) \geq (1-1/e-\epsilon)  f^*(S_{\opt})$. 
\end{proof}

We take the following setting of $l$ to make Eqs. (\ref{eq: cond_1}) and (\ref{eq: cond_2}) satisfied with probability at least $1-2/N$.
\[l^* \define \frac{\max(l_1, l_2)}{(1+\epsilon)^2 f^*(S_{\opt})}\] where
\[l_1 \define \dfrac{n (\ln \binom{n}{k} + \ln N)(2+\epsilon_{11}(1+\epsilon))}{(\epsilon_{11})^{2}}\] and
\[l_2 \define \frac{2\cdot n \cdot \ln N }{(\epsilon_{12})^2}.\] 

\begin{lemma}
\label{lemma: l}
$f^*(S^*) \geq (1-1/e-\epsilon) f^* (S_{\opt})$ with probability at least $1-2/N$ provided that $l\geq l^*$.
\end{lemma}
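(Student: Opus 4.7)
The plan is to invoke Lemma~\ref{lemma: condition}: Eqs.~(\ref{eq: cond_3}) and (\ref{eq: cond_4}) hold by construction of $\epsilon_{11},\epsilon_{12}$, so it suffices to show that Eqs.~(\ref{eq: cond_1}) and~(\ref{eq: cond_2}) each fail with probability at most $1/N$ when $l \ge l^*$; a union bound then yields the $1-2/N$ guarantee.

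To apply the Chernoff bounds I would first normalize the samples. For any fixed $S\subseteq V$, the R-samples $\P_1,\dots,\P_l$ are i.i.d., and since $x(\P_i,S)=\sum_{P\in \P_i}y(P,S)\le |\P_i|\le n$, the variables $X_i\coloneqq x(\P_i,S)/n$ lie in $[0,1]$ with common mean $\mu=f^*(S)/n$ by Lemma~\ref{lemma: key}. This puts us in position to apply the bounds~(\ref{eq:chernoff_1}) and~(\ref{eq:chernoff_2}).

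For Eq.~(\ref{eq: cond_2}), I would take $S=S_{\opt}$ and choose $\delta=\epsilon_{12}(1+\epsilon)$, so that $-\delta l\mu = -\epsilon_{12}(1+\epsilon)\, l\, f^*(S_{\opt})/n$. The lower-tail inequality~(\ref{eq:chernoff_2}) gives a failure probability bounded by $\exp\!\bigl(-l\, f^*(S_{\opt})\,\epsilon_{12}^{2}(1+\epsilon)^2/(2n)\bigr)$, which is at most $1/N$ as soon as $l \ge l_2/((1+\epsilon)^2 f^*(S_{\opt}))$; this is guaranteed by $l \ge l^*$.

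For Eq.~(\ref{eq: cond_1}) the subtlety is that the solution $S^*$ returned by Alg.~\ref{alg: greedy} is itself random, so I would fix an arbitrary $S$ with $|S|=k$ and then union-bound over all $\binom{n}{k}$ such candidates. With the same normalization, I would choose $\delta=\epsilon_{11}(1+\epsilon) f^*(S_{\opt})/f^*(S)$, which yields the desired threshold $\delta l\mu=\epsilon_{11}(1+\epsilon)\, l\, f^*(S_{\opt})/n$. The upper-tail bound~(\ref{eq:chernoff_1}) then has exponent $l\mu\delta^2/(2+\delta) = l\,(\epsilon_{11}(1+\epsilon))^2 (f^*(S_{\opt}))^2 / \bigl(n\,(2f^*(S)+\epsilon_{11}(1+\epsilon)f^*(S_{\opt}))\bigr)$. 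Using the monotone bound $f^*(S)\le f^*(S_{\opt})$, this is at least $l\,(\epsilon_{11}(1+\epsilon))^2 f^*(S_{\opt}) / \bigl(n\,(2+\epsilon_{11}(1+\epsilon))\bigr)$, which no longer depends on $S$. Plugging $l \ge l_1/((1+\epsilon)^2 f^*(S_{\opt}))$ makes the per-$S$ failure probability at most $1/(N\binom{n}{k})$, so the union bound over the $\binom{n}{k}$ candidate $k$-subsets gives the desired $1/N$. The main obstacle is precisely this last step: because the deviation threshold is phrased in terms of $f^*(S_{\opt})$ rather than $f^*(S)$, one must use an $S$-dependent $\delta$ and then clean up the Chernoff exponent via $f^*(S)\le f^*(S_{\opt})$ so that a single $l$ works uniformly for every $k$-subset simultaneously.
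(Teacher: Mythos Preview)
Your proposal is correct and follows essentially the same route as the paper: normalize by $n$ so that the per-sample summands lie in $[0,1]$, apply the lower-tail Chernoff bound to $S_{\opt}$ with $\delta=\epsilon_{12}(1+\epsilon)$ to handle Eq.~(\ref{eq: cond_2}), and for Eq.~(\ref{eq: cond_1}) use the $S$-dependent choice $\delta=\epsilon_{11}(1+\epsilon)f^*(S_{\opt})/f^*(S)$, simplify the exponent via $f^*(S)\le f^*(S_{\opt})$, and union-bound over all $\binom{n}{k}$ candidate $k$-sets. The paper's proof carries out exactly these computations, including the same observation that the union bound is needed because $S^*$ is not known before the R-samples are drawn.
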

\begin{proof}
For each $S \subseteq V$,
\begin{eqnarray*}
&&\Pr[\overline{x}(\R_l, S)-f^*(S) \geq \epsilon_{11}   (1+\epsilon)f^*(S_{\opt})] \\
&=&\Pr\Big[\frac{\overline{x}\tiny(\R_l, S\tiny)}{n}-\frac{f^*\tiny(S\tiny)}{n} \geq \frac{f^*\tiny(S\tiny)}{n}\cdot \frac{\epsilon_{11}(1+\epsilon) \cdot f^*\tiny(S_{\opt}\tiny)}{f^*\tiny(S\tiny)}\Big] \\
&&\{\text{By the Chernoff bound}\}\\
&\leq& \exp\Big(-\dfrac{l\cdot \frac{f^*(S)}{n} \cdot  \big(\frac{\epsilon_{11} \cdot (1+\epsilon) f^*(S_{\opt})}{f^*(S)}\big)^2}{2+ \frac{\epsilon_{11}  \cdot(1+\epsilon) f^*(S_{\opt})}{f^*(S)}}\Big)\\
&=& \exp\Big(-\dfrac{l\cdot \big(\epsilon_{11}  \cdot (1+\epsilon) f^*(S_{\opt})\big)^2 }{2\cdot n \cdot f^*(S)+   n \cdot \epsilon_{11}  \cdot (1+\epsilon) f^*(S_{\opt})}\Big)\\
&&\{\text{Since $f^*(S) \leq f^*(S_{\opt})$ }\}\\
&\leq& \exp\Big(-\dfrac{l \cdot \epsilon_{11}^2 (1+\epsilon)^2\cdot f^*\tiny(S_{\opt}\tiny) }{2\cdot n +   n \cdot \epsilon_{11} (1+\epsilon)}\Big)\\
&&\{\text{Since $l \geq \frac{l_1}{(1+\epsilon)^2 \cdot f^*(S_{\opt})}$ }\}\\
&\leq& \frac{1}{\binom{n}{k}\cdot N}\\
\end{eqnarray*}
Note that $|S^*|=k$ and there are at most $\binom{n}{k}$ subsets of $V$ with size $k$. By the union bound, with probability at least $1-1/N$, we have $\overline{x}(\R_l, S^*)-f^*(S^*) \leq \epsilon_{11}(1+\epsilon) \cdot f^*(S_{\opt})$. Here we have to consider all the subsets with size $k$ because the solution produced by Alg. \ref{alg: greedy} is unknown to us before generating the R-samples.

Second, applying the Chernoff bound immediately yields that when $l \geq \frac{l_2}{(1+\epsilon)^2 f^*(S_{\opt})}$, with probability at least $1-1/N$, we have $\overline{x}(\R_l, S_{\opt})- f^*(S_{\opt}) \geq -\epsilon_{12}  (1+\epsilon)  f^*(S_{\opt})$. 

Since $l \geq l^*$, by the union bound, $f^*(S^*) \geq (1-1/e-\epsilon) f^* (S_{\opt})$ with probability at least $1-2/N$.
\end{proof}

So far $\epsilon_{11}>0$ and $\epsilon_{12}>0$ are free parameters as long as they satisfy that Eqs. (\ref{eq: cond_3}) and (\ref{eq: cond_4}). Note that Eqs. (\ref{eq: cond_3}) and (\ref{eq: cond_4}) have at least one feasible solution. Ideally we should minimize $l$, so the optimal setting can be obtained by solving the following problem.
\begin{equation}
\label{eq: optimal}
\begin{aligned}
& {\text{minimize}}
& & \max(l_1,l_2) \\
& \text{subject to} & & \text{Eqs. (\ref{eq: cond_3}) and (\ref{eq: cond_4})}
\end{aligned}
\end{equation}

Now the only unknown part of $l^*$ is $f^*(S_{\opt})$. According to Lemma \ref{lemma: l}, for each $OPT_L$ with $OPT_L \leq (1+\epsilon)^2   f^*(S_{\opt})$, setting $l$ as $\max(l_1, l_2)/OPT_L$ will be sufficient. In the next subsection, we will show how to obtain such a $OPT_L$.

\subsection{Lower Bound of $f^*(S_{opt})$}
\label{subsec: lower}

Note that for each $S_L \subseteq V$ with $|S_L|=k$, $f^*(S_L)$ is a lower bound of $f^*(S_{\opt})$, and according to Lemma \ref{lemma: key}, $f^*(S_{L})$ can be estimated by the hybrid sampling process. The following result establishes the relationship between the number of samples and the accuracy.

\begin{lemma}[Dagum \textit{et al.} \cite{dagum2000optimal}]
\label{lemma: optimal_monte}
For each $0 \leq \epsilon_0 \leq 1$ and $N \geq 0$, there exists an algorithm which, with probability at least $1-1/N$, produces a $f^*_{\epsilon_0}(S_{L})$ such that  
\begin{equation}
\label{eq: opt_lower}
(1-\epsilon_0)   f^*(S_{L}) \leq  f^*_{\epsilon_0}(S_{L}) \leq (1+\epsilon_0)  f^*(S_{L})
\end{equation}
where the number of the used simulations is asymptotically bounded by $O(\frac{\epsilon^2_0+4(e-2)(1+\epsilon_0)\ln(N/2)}{\epsilon^2_0 \cdot \frac{f^*(S_{L})}{n}})$. 
\end{lemma}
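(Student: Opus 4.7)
The plan is to recast the estimation of $f^*(S_L)$ as a standard Monte Carlo mean estimation problem with a bounded random variable and then directly invoke the AA (Approximation Algorithm) of Dagum, Karp, Luby, and Ross.

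First, I observe that by Lemma \ref{lemma: key}, each R-sample $\P$ produced by Alg. \ref{alg: whole_sample} satisfies $\E[x(\P, S_L)] = f^*(S_L)$. Moreover, because $\P$ contains exactly one subset per node of $V_r$, we have $x(\P, S_L) = \sum_{v \in V_r} y(P_v, S_L) \leq |V_r| \leq n$. Hence the random variables $X_i \coloneqq x(\P_i, S_L)/n$ obtained from independent runs of Alg. \ref{alg: whole_sample} are i.i.d., take values in $[0,1]$, and have common mean $\mu \coloneqq f^*(S_L)/n$. An $(1\pm \epsilon_0)$-multiplicative estimate of $\mu$ immediately yields one of $f^*(S_L)$ by scaling by $n$.

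Next, I apply the AA algorithm of Dagum \emph{et al.}: draw fresh R-samples $\P_1, \P_2, \dots$ and update the running sum $\sum_i X_i$ until the first time $T$ at which $\sum_{i \leq T} X_i$ exceeds the threshold
\[\Upsilon \coloneqq 1 + (1+\epsilon_0) \cdot \frac{4(e-2)\ln(2N)}{\epsilon_0^2}.\]
Then output $f^*_{\epsilon_0}(S_L) \coloneqq n \cdot \Upsilon / T$. Dagum \emph{et al.}\ prove two facts about this procedure: (i) with probability at least $1 - 1/N$ the returned estimator satisfies $(1-\epsilon_0)\mu \leq \Upsilon/T \leq (1+\epsilon_0)\mu$, which is exactly Eq.~(\ref{eq: opt_lower}) after multiplying through by $n$; and (ii) the stopping time obeys $\E[T] = O(\Upsilon/\mu)$, which upon substituting $\mu = f^*(S_L)/n$ yields precisely the sample-complexity bound stated in the lemma.

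The main obstacle in proving AA from scratch is that the stopping time $T$ is data-dependent, so a plain Chernoff-type bound applied to a pre-specified number of samples cannot directly produce the relative-error guarantee, especially because $\mu$ is unknown. Dagum \emph{et al.}\ resolve this via a martingale/optional-stopping argument paired with Bernstein-type tail inequalities, calibrated so that $T$ concentrates sharply around $\Upsilon/\mu$ regardless of the value of $\mu$. Since the present lemma is a direct citation of their result, the proof reduces to verifying that our i.i.d.\ sequence $\{X_i\}$ meets the hypotheses of AA, which is established in the first step above.
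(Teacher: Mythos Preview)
The paper does not prove this lemma at all: it is stated as a direct citation of Dagum, Karp, Luby, and Ross \cite{dagum2000optimal}, with no accompanying argument. Your proposal is therefore not competing against any proof in the paper, and your reduction is the right one---normalize $x(\P,S_L)$ by $n$ to get an i.i.d.\ $[0,1]$ sequence with mean $f^*(S_L)/n$ (which you justify correctly via Lemma~\ref{lemma: key} and the trivial bound $|\P|=|V_r|\le n$), then invoke the stopping-rule estimator $\mathcal{AA}$ from \cite{dagum2000optimal} and rescale. That is exactly how the lemma is meant to be instantiated here, and your remark that the nontrivial content (the adaptive stopping time analysis) lives entirely inside \cite{dagum2000optimal} is accurate. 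One cosmetic note: your threshold uses $\ln(2N)$ whereas the paper's bound is written with $\ln(N/2)$; the former matches the usual $\ln(2/\delta)$ form in Dagum \emph{et al.}\ with $\delta=1/N$, so the discrepancy appears to be a typo in the paper rather than an error on your part.
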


Taking $\epsilon_0$ as $\epsilon^2+2\epsilon$, such a lower bound $OPT_L=f^*_{\epsilon_0}(S_{L})$ satisfies $f^*_{\epsilon_0}(S_{L})\leq (1+\epsilon^2+2\epsilon)  f^*(S_{L})\leq (1+\epsilon)^2  f^*(S_{opt})$, and combining Lemma \ref{lemma: cost-R}, the expected running time of generating such an estimate is $O\Big( \frac{n\cdot m(1+\max_1 ) \cdot \ln N}{\epsilon^2 \cdot  OPT_L}\Big)$ for any small $\epsilon$. Since we would prefer a tight lower bound so that the running time can be minimized, we could select $S_L$ through intuitive heuristics. For example, selecting the our-neighbors of the nodes in $S_r$.

\subsection{The HMP Algorithm}
Now we are ready to present our algorithm. The algorithm is formally given in Alg. \ref{alg: HS}, termed as Hybrid-sampling based Misinformation Prevention (HMP) algorithm. As shown in Alg. \ref{alg: HS}, we first compute a lower bound $OPT_L$ of $(1+\epsilon)^2\cdot f^*(S_{\opt})$ according to Sec. \ref{subsec: lower}, then solve Eq. (\ref{eq: optimal}) to find the optimal setting of $\epsilon_{11}$ and $\epsilon_{12}$. Finally, we use the framework Alg. \ref{alg: framework} to obtain the solution $S^*$. 

\begin{algorithm}[t]
\caption{HMP Algorithm}
\label{alg: HS}
\begin{algorithmic}[1]
\State \textbf{Input:} $G$, $\epsilon$, $N$, $S_r$ and $k$;
\State Select any subset $S_L$ with $|S_L|=k$ and compute $OPT_L$ by Lemma \ref{lemma: optimal_monte} with $\epsilon_0=\epsilon^2+2\epsilon$;
\State Compute $\epsilon_{11}$ and $\epsilon_{12}$ by solving Eq. (\ref{eq: optimal}).
\State $l \leftarrow \max(l_1, l_2)/OPT_L$
\State $S^* \leftarrow$ Alg. \ref{alg: framework}$(G, S_r, k, l)$;
\State \textbf{Return} $S^*$;
\end{algorithmic}
\end{algorithm}

\textbf{Performance Guarantee.} The HMP algorithm has the following performance bounds.

\begin{theorem}
\label{theorem: bound}
With probability at least $1-3/N$, the HMP algorithm returns a solution $S^*$ such that \[f^*(S^*) \geq (1-1/e-\epsilon ) f^*(S_{\opt}).\]
\end{theorem}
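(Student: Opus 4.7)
The plan is to stitch together Lemma \ref{lemma: optimal_monte} (correctness of the lower bound estimate $OPT_L$) and Lemma \ref{lemma: l} (correctness of the greedy output given a sufficient sample size) via a union bound. The theorem is essentially a bookkeeping consequence of the earlier machinery, so the proof reduces to verifying that the parameter choices in Alg.~\ref{alg: HS} are consistent with the hypotheses of those lemmas.

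First, I would analyze the lower bound step. Alg.~\ref{alg: HS} invokes Lemma \ref{lemma: optimal_monte} with $\epsilon_0 \coloneqq \epsilon^2 + 2\epsilon$. Let $\mathcal{E}_1$ be the event that $f^*_{\epsilon_0}(S_L) \leq (1+\epsilon_0) f^*(S_L)$; by Lemma \ref{lemma: optimal_monte}, $\Pr[\mathcal{E}_1] \geq 1-1/N$. Conditioning on $\mathcal{E}_1$, and using $f^*(S_L) \leq f^*(S_{\opt})$ together with $1+\epsilon_0 = (1+\epsilon)^2$, I get
\[
OPT_L \;=\; f^*_{\epsilon_0}(S_L) \;\leq\; (1+\epsilon)^2 \, f^*(S_{\opt}).
\]
Hence the value of $l$ chosen on line 4 of Alg.~\ref{alg: HS} satisfies
\[
l \;=\; \frac{\max(l_1,l_2)}{OPT_L} \;\geq\; \frac{\max(l_1,l_2)}{(1+\epsilon)^2 \, f^*(S_{\opt})} \;=\; l^*.
\]

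Second, conditional on $\mathcal{E}_1$, this inequality $l \geq l^*$ is exactly the hypothesis of Lemma \ref{lemma: l}. That lemma, applied to the R-samples generated inside Alg.~\ref{alg: framework} on line 5, says that with probability at least $1-2/N$ (over the randomness of those R-samples, which is independent of the randomness used to build $OPT_L$) the solution $S^*$ returned by the greedy subroutine satisfies $f^*(S^*) \geq (1-1/e-\epsilon) f^*(S_{\opt})$. Call this event $\mathcal{E}_2$. Note that Lemma \ref{lemma: l} implicitly requires that Eqs.~(\ref{eq: cond_3})–(\ref{eq: cond_4}) admit a feasible $(\epsilon_{11},\epsilon_{12})$ that Alg.~\ref{alg: HS} can plug into Eq.~(\ref{eq: optimal}); the existence of such a feasible pair is already noted in the paragraph following Lemma \ref{lemma: l}, so no extra argument is needed here.

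Finally, I would conclude by a union bound: $\Pr[\mathcal{E}_1 \cap \mathcal{E}_2] \geq 1 - \Pr[\overline{\mathcal{E}_1}] - \Pr[\overline{\mathcal{E}_2}] \geq 1 - 1/N - 2/N = 1-3/N$, and on this intersection we have the claimed approximation guarantee. The only mild obstacle is making the independence argument clean — the two sampling stages (the Dagum-style estimator on $S_L$ and the subsequent R-sample generation) draw fresh random bits, so their failure probabilities combine additively without correlation issues, which is what lets the $3/N$ figure drop out.
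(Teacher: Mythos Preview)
Your proposal is correct and follows essentially the same approach as the paper's own proof, which is a two-line argument observing that $OPT_L \leq (1+\epsilon)^2 f^*(S_{\opt})$ with probability at least $1-1/N$ and then invoking Lemma~\ref{lemma: l}. Your version simply spells out the union-bound bookkeeping and the independence of the two sampling stages more carefully than the paper does.
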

\begin{proof}
Since $OPT_L \leq (1+\epsilon)^2 f^*(S_{\opt})$ holds with probability at least $1-1/N$, the theorem follows immediately from Lemma \ref{lemma: l}. 

\end{proof}

\textbf{Running Time.} In Alg. \ref{alg: framework}, there are totally $l$ R-samples generated where $l=O(\frac{n \cdot (k\ln n+\ln N)}{\epsilon^2\cdot OPT_L})$. By Lemma \ref{lemma: cost-R}, the expected running time of line 2 in Alg. \ref{alg: framework} is 
\begin{equation}
\label{eq: time}
O\Big(\frac{m \cdot n(k\ln n+\ln N)(1+\max_1)}{\epsilon^2\cdot OPT_L}\Big).
\end{equation}
Furthermore, Alg. \ref{alg: greedy} can be implemented to run in time linear to the total size of its input \cite{vazirani2013approximation}. Therefore, line 5 in Alg. \ref{alg: HS} runs in the same as Eq. (\ref{eq: time}). The rest part of Alg. \ref{alg: HS} is dominated by line 5 so the whole algorithm runs asymptotically in the same as line 5 does. 

For the one interested in a success probability of $1-n^{-l}$, the running time would be \[O\Big(\frac{m \cdot n\cdot \ln n \cdot  (k+l) (1+\max_1)}{\epsilon^2\cdot OPT_L}\Big).\]

\subsection{Comparing to the Existing Works.}
\label{subsec: compare}
There are three most related existing works, \cite{budak2011limiting, tong2017efficient, simpson2018scalable},  which aim at solving Problems \ref{problem: max_non} and \ref{problem: max_prevetion} by designing approximation algorithms. C. Budak \textit{et al.} \cite{budak2011limiting} first proposed the problem of limiting the spread of information and they considered Problem \ref{problem: max_prevetion}. In \cite{budak2011limiting}, it is shown that Problem \ref{problem: max_prevetion} is monotone and submodular, and the authors use Monte Carlo simulation straightforwardly (i.e., the forward sampling method) to overcome the difficulty in computing the objective function. G. Tong \textit{et al.} \cite{tong2017efficient} applied the reverse sampling technique invented by \cite{borgs2014maximizing, tang2014influence, tang2015influence} to solve Problem \ref{problem: max_non}. Very recently, M. Simpson \textit{et al.} \cite{simpson2018scalable} adopted the same technique to solve Problem \ref{problem: max_prevetion}. In the work \cite{tong2017efficient} and \cite{simpson2018scalable}, the authors utilized the same sampling method designed based on the uniform reverse sampling framework. As mentioned earlier, selecting a node uniformly is not efficient for the MN or MP problem, because it cannot attach extra importance to the nodes that are prone to be $C_r$-active. Furthermore, as discussed in \cite{simpson2018scalable}, for the MP problem, the sample obtained by the uniform reverse sampling method can be an empty set. When an empty set is sampled, it corresponds to the case that the selected node will not be affected by the misinformation and we do not need to select any positive seed node to protect that node. As shown in Alg. \ref{alg: greedy}, the empty set cannot provide any information for selecting effective nodes. On the other hand, in the hybrid sampling method, every sample we obtain by Alg. \ref{alg: reverse_v} is guaranteed to be non-empty, and therefore our sampling process is more effective in generating the samples that can help us find high-quality positive seed nodes. We have observed the same in our experiments.

\section{Experiments}
\label{sec: exp}
We now the present the experiments done for evaluating the proposed solution by comparing it to the state-of-the-art solutions, RBR \cite{tong2017efficient} and RPS \cite{simpson2018scalable}. 

\begin{figure*}[t]
\centering
\subfloat[Wiki under weighted-cascade model]{\label{fig: wiki_wc}\includegraphics[width=0.30\textwidth]{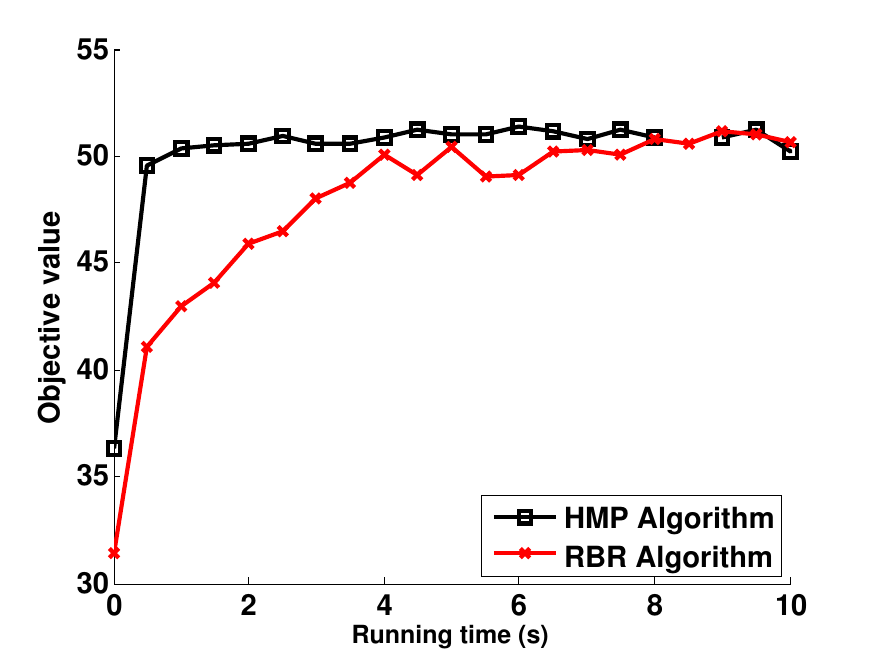}} \hspace{1mm}  
\subfloat[HepPh under weighted-cascade model]{\label{fig: hepph_wc}\includegraphics[width=0.30\textwidth]{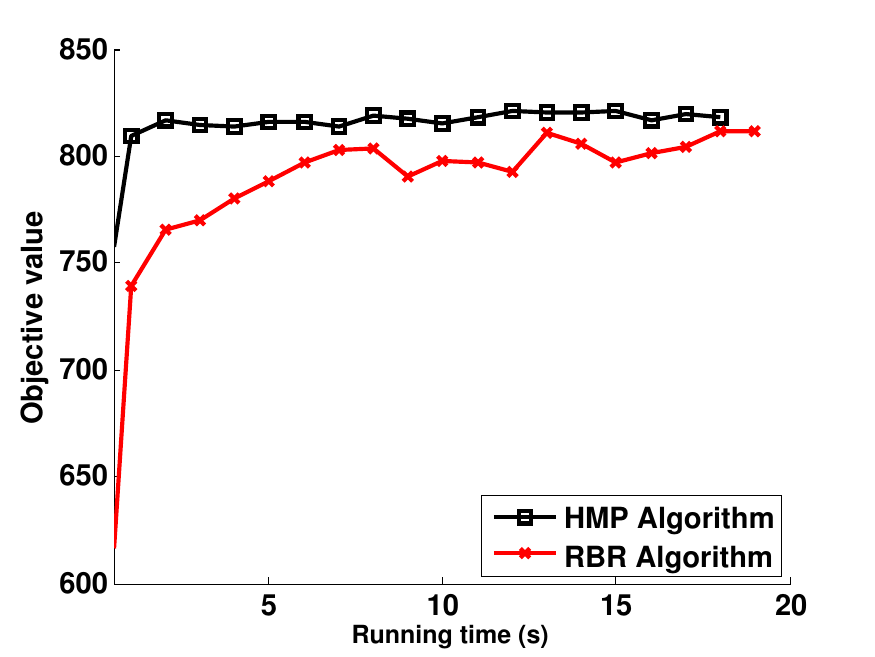}} \hspace{1mm}  
\subfloat[HepPh under uniform 0.1]{\label{fig: hepph_ic}\includegraphics[width=0.30\textwidth]{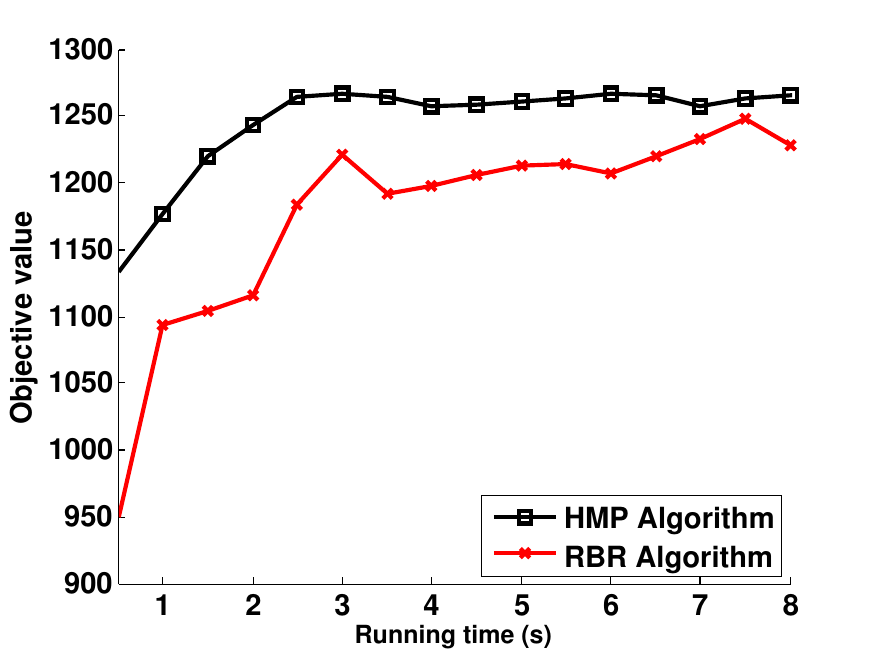}} \hspace{1mm}  
\vspace{-2mm}
\subfloat[Epinions under weighted-cascade model]{\label{fig: epinions_wc}\includegraphics[width=0.30\textwidth]{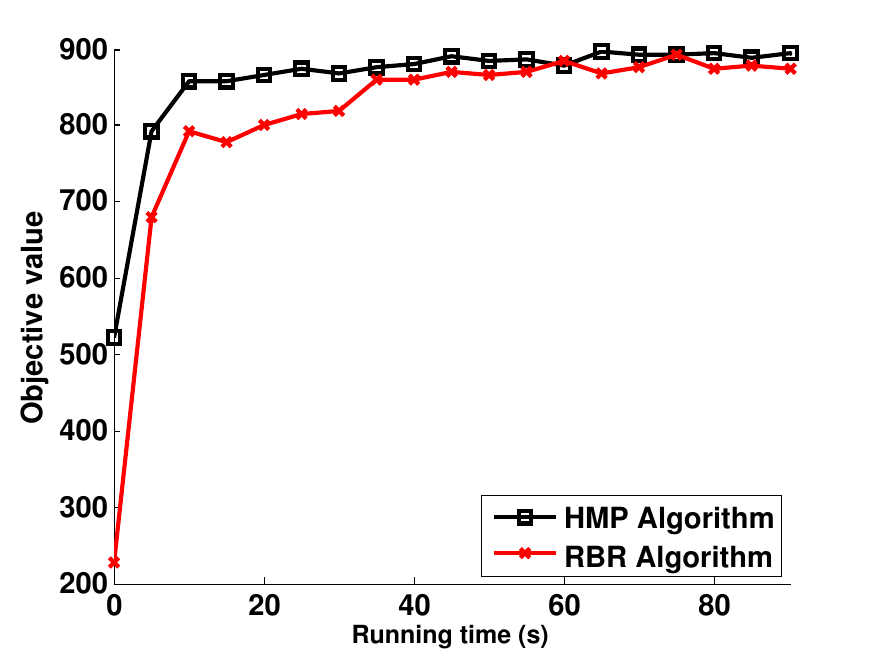}} \hspace{1mm}  
\subfloat[Epinions under uniform 0.01]{\label{fig: epinions_ic}\includegraphics[width=0.30\textwidth]{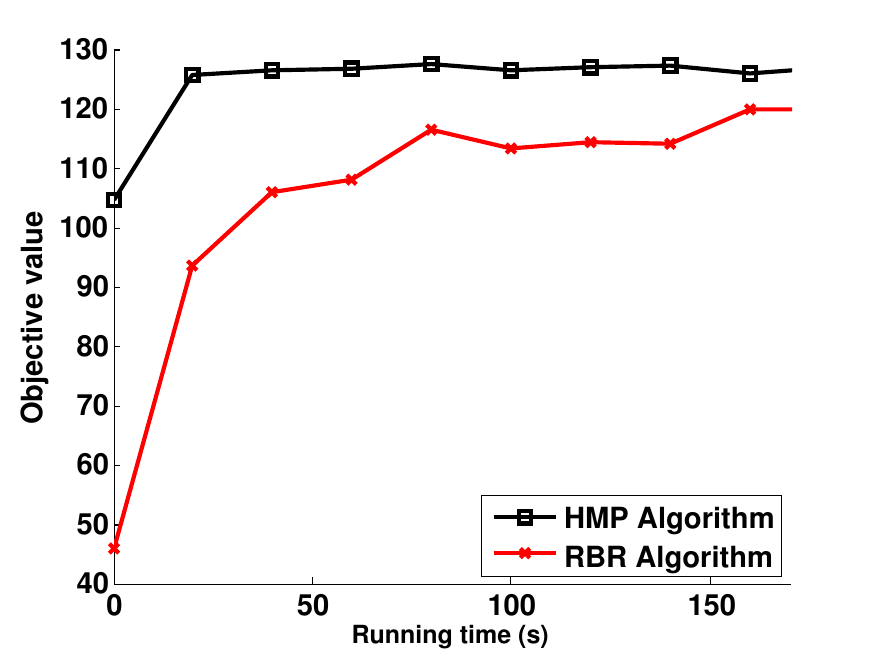}} \hspace{1mm}  
\subfloat[DBLP under weighted-cascade model]{\label{fig: dblp_wc}\includegraphics[width=0.30\textwidth]{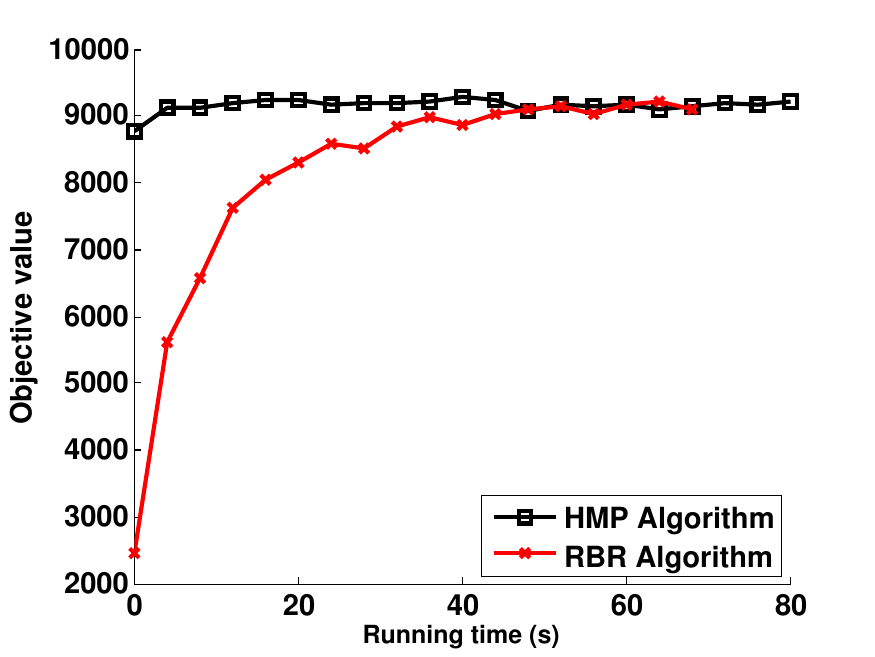}} \hspace{0mm}  
\vspace{-2mm}
\subfloat[DBLP under uniform 0.1]{\label{fig: dblp_ic}\includegraphics[width=0.30\textwidth]{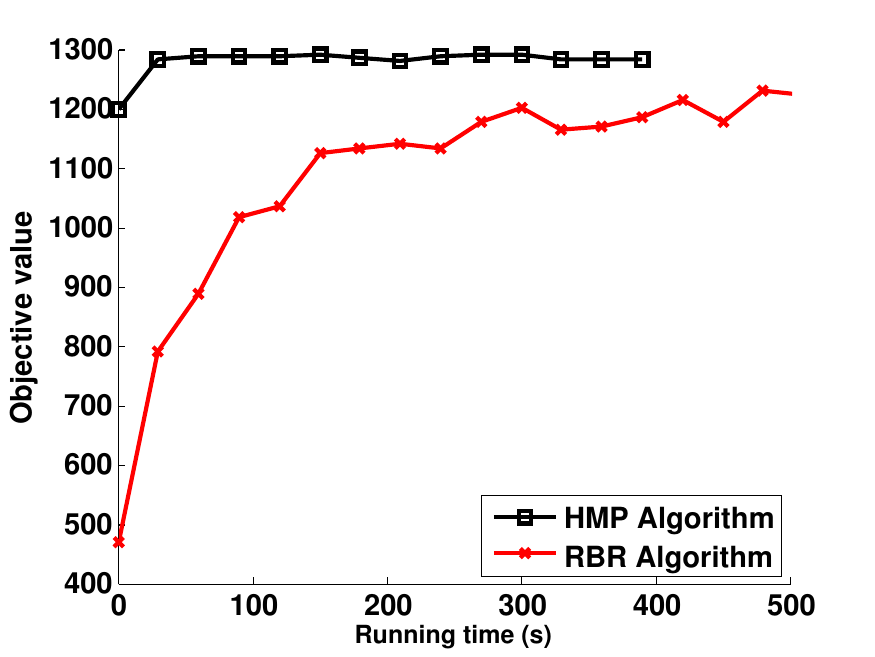}} \hspace{0mm}  
\subfloat[Youtube under weighted-cascade model]{\label{fig: youtube_wc}\includegraphics[width=0.30\textwidth]{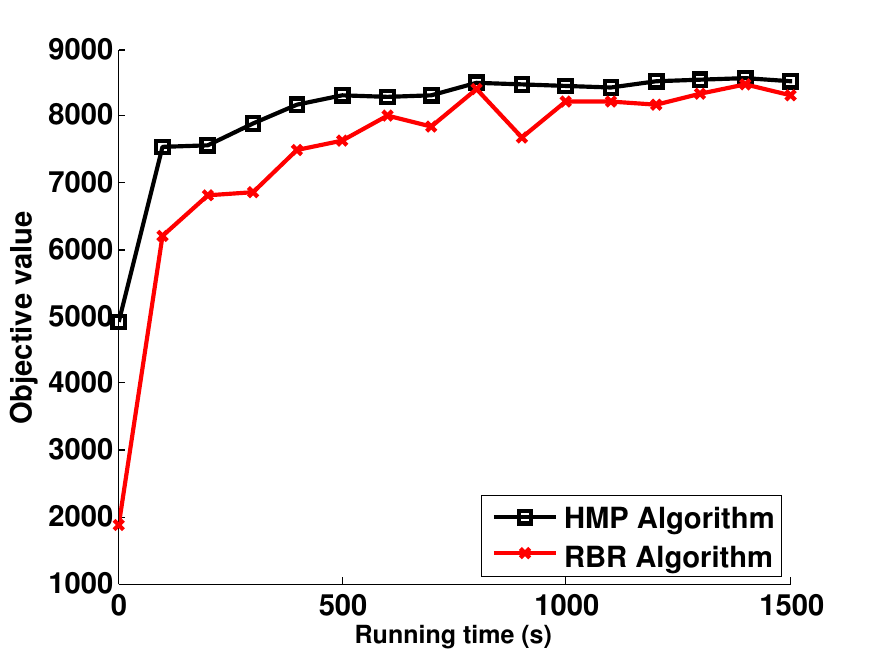}} \hspace{0mm}  
\subfloat[Youtube under uniform 0.01]{\label{fig: youtube_ic}\includegraphics[width=0.30\textwidth]{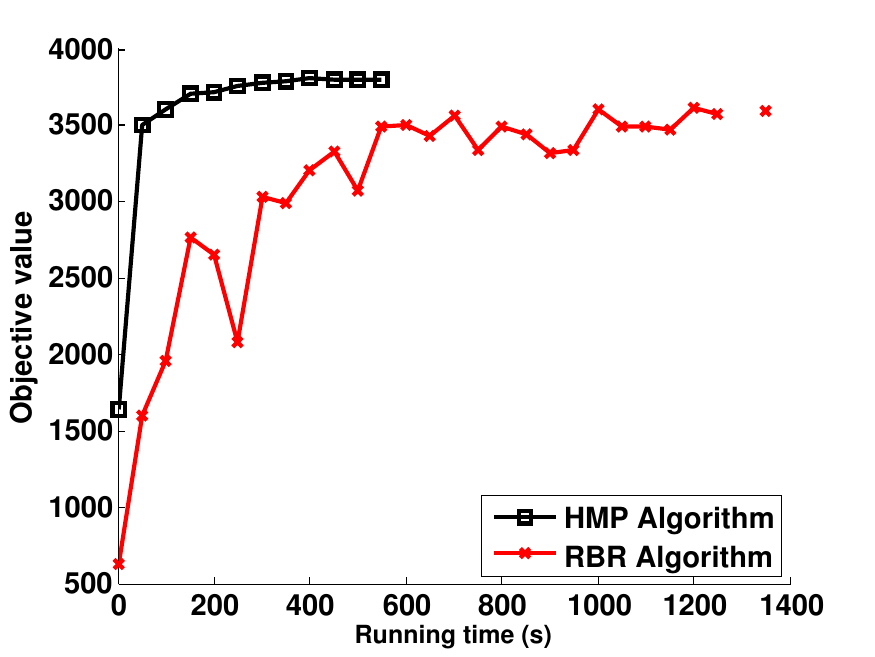}} \hspace{0mm}  
\vspace{-0mm}
\caption{Experimental Result.} 
\label{fig: exp}
\vspace{-3mm}
\end{figure*}

\subsection{Experimental Setting}

In most of the existing works, the performance of an algorithm is measured by either the running time or the quality of the produced solution. However, for two algorithms namely A and B, it happens that A produces a better solution than that produced by B, but meanwhile, B runs much faster than A does. Under such a scenario, one can hardly decide that which algorithm is better. In order to fairly compare the algorithms, we define the performance of an algorithm as the quality of the produced solution under a specific time constraint. 

Note that all the three algorithms, HMP, RBR and RPS, follow the framework shown in Alg. \ref{alg: framework}, and the only difference is that HMP utilizes a different method to generate samples. The three algorithms all have the following two properties: (1) the more samples we use, the better solution we obtain; (2) the quality of the solution finally converges to an absolute value upper bounded by the optimal value. Therefore, by controlling the number of the used samples, there is a natural trade-off between the running time and quality of the solution. In this way, we are able to see that with the same running time which algorithm is the best. As discussed in Sec. \ref{subsec: compare}, the RBR algorithm and the RPS algorithm essentially use the same method to generate samples, so we only compare HMP with RBR.

\begin{table}[t]
\renewcommand{\arraystretch}{1.3}
\centering
{\begin{tabular}{ C{1cm} || C{0.8cm}| C{0.8cm} |p{4cm}}
\textbf{Dataset} 	& \textbf{Node} 		& \textbf{Edge} 		& \textbf{Description} \\   
\hline
\hline
Wiki 	& 7K		& 103K 		& Wikipedia who-votes-on-whom network \\
\hline
HepPh 	& 34K		& 421K 		& Arxiv High Energy Physics paper citation network \\
\hline
Epinions 	& 75K		& 508K 		& Who-trusts-whom network of Epinions.com \\
\hline
DBLP 	& 317K		& 1.04M 		& DBLP collaboration network \\
\hline
Youtube 	& 1.1M		& 6.0M 		& Youtube online social network \\

\end{tabular}}
\caption{{Datasets}}
\label{table: datasets}
\end{table}

\textbf{Experimental Setting.} We adopt five datasets, from small to large, Wiki, HepPh, Epinions, DBLP, and Youtube, borrowed from SNAP \cite{snapnets}. A brief description of each dataset is given in Table. \ref{table: datasets}. We consider three types of propagation probability: (1) the uniform setting where $\Pr[e]=0.1$; (2) the uniform setting where $\Pr[e]=0.01$; (3) the weighted-cascade model where $\Pr[(u,v)]=1/d_v$ for each $(u,v) \in E$ and $d_v$ is the in-degree of $v$. These settings have been widely adopted in prior work. We set $k=10$ and $|S_r|=15$ in our experiments, unless otherwise specified. The nodes in $S_r$ are selected from the nodes with the highest individual influence. For each algorithm and dataset, we continuously increase the number of the used samples and record the produced positive the seed set, until the quality of the solution tends to converge. Each solution is finally evaluated by 10,000 Monte Carlo simulations. 

\subsection{Experimental Results}  
The experimental results are shown in Fig. \ref{fig: exp}. Each figure gives two curves plotting the results of HMP and RBR respectively. The minimum unit of the running time is 0.1 second, and it is shown as 0 when the running time is less than 0.1 second.

\textbf{Major observations.} The most important observation is that HMP consistently outperforms RBR under all considered datasets and propagation probability settings. Under the same time constraint, the solution of HMP is more effective than that of RBR with respect to maximizing $f^*(S)$. For example, in Fig. \ref{fig: epinions_wc}, when the running time is required to be less than one second, HMP is able to protect more than 500 users while RBR can only protect 220 users. Furthermore, HMP takes much less time to reach the maximal performance. As an extreme example, in Fig. \ref{fig: dblp_wc}, HMP converges after 5 seconds while RBR converges after 40 seconds. In other words, a few samples will be sufficient for HMP to achieve a high performance. Finally, the curves of HMP is smoother than that of RBR, which means HMP is more robust than RBR. Such a scenario suggests that the variance of the solution produced by RBR is typically higher than that of the solution given by HMP. 

\textbf{Minor observations.} According to the figures, HMP usually has a breakpoint after which the quality of the solution remains unchanged. For example, (0.25, 49) in Fig. \ref{fig: wiki_wc} and (50, 3500) in Fig. \ref{fig: youtube_ic}. On the other hand, curves of RBR tend to increase gradually and there is no significant breakpoint. It would be great if we can identify the numbers of the samples corresponding to such breakpoints, because they are the minimum number to reach the maximal performance. Unfortunately, we find that the theoretical bound given in Sec. \ref{sec: algortihm} usually far exceeds the minimum requirement of the breakpoints. One possible reason is that the theoretical bounds are dealing with the worst-case scenario so it is too pessimistic for the average case. Another reason is that we assume the variance is $n^2$ in the Chernoff bound, which leads to an overestimate of the number of the required samples. The variance of the distribution over the diffusion outcomes can significantly help us to obtain an accurate estimate. Currently, we are not aware of any existing work which has considered this issue.

\begin{remark}
\label{remark: ratio}
It should be noted that when the objective function stops increasing, its maximal performance has been almost reached. When the maximal performance is reached, the solution produced at that moment is no worse than a $(1-1/e)$-approximation as $1-1/e$ is the worst-case ratio. 
\end{remark}

\section{Conclusion and Future Work}
\label{sec: con}
In this paper, we present a hybrid sampling method which is designed particularly for the misinformation prevention problem. We show that the new sampling method can be used to design an approximation algorithm which outperforms the state-of-the-art solutions. 

We propose some future works which we believe are interesting. It has been shown that there exist $O(k \cdot m \cdot  \log n/\epsilon^2)$ algorithms for the MN and IM problem. Though the MP problem and the MN problem merely differ by a constant, the best algorithm for the MP problem is $O(k \cdot m \cdot  n \cdot \log n/\epsilon^2)$ (assuming $OPT_L\geq (1+\max_1)$), and it is open that if we can find a $O(k \cdot m \cdot \log n/\epsilon^2)$ algorithm for the MP problem. In addition, the existing lower bounds on the number of the samples used to reach the breakpoints observed in the experiments are still pessimistic. We are looking for more practical methods to help us find such breakpoints so that we can achieve the best performance with the minimum running time.

\section*{Acknowledgment}
We thank the reviewer for pointing out the issue discussed in Remark \ref{remark: ratio}. This work is supported in part by the start-up grant from the University of Delaware. 

\bibliographystyle{IEEEtran}
\bibliography{sigproc}

\end{document}